\newcommand{\trw}{\mathsf{tw}} 
\newcommand{\pw}{\mathsf{pw}} 
\newcommand{\s}{\mathsf{s}} 
\newcommand{\cop}{\mathsf{cop}}
 \newcommand{\apx}{\mathsf{apxcop}}
\newcommand{\capt}{\mathsf{capt}}
\Crefname{figure}{Figure}{Figures}
\Crefname{claim}{Claim}{Claims}
\crefname{section}{section}{sections}
\tikzset{normalnode/.style={circle, draw, fill=black, inner sep=0, minimum width=1.5mm}}
\renewcommand{\le}{\leqslant}
\renewcommand{\leq}{\leqslant}
\renewcommand{\geq}{\geqslant}
\newcommand{\comment}[1]{}
\begin{document}

\title{Capturing an Invisible Robber using Separators
}


  \author{Igor Potapov \orcidlink{0000-0002-7192-7853} \and
Tymofii Prokopenko \orcidlink{0009-0002-1922-7718}
\and
John Sylvester \orcidlink{0000-0002-6543-2934}
}

\institute{Department of Computer Science, University of Liverpool, UK
\email{\{potapov,t.prokopenko,john.sylvester\}@liverpool.ac.uk}\\
}


\maketitle
\begin{abstract}
    We study the zero-visibility cops and robbers game, where the robber is invisible to the cops until they are caught. This differs from the classic game where full information about the robber's location is known at any time. A previously known solution for capturing a robber in the zero-visibility case is based on the pathwidth decomposition. We provide an alternative solution based on a separation hierarchy, improving capture time and space complexity without asymptotically  increasing the zero-visibility cop number in most cases. In addition, we provide a better bound on the \textit{approximate} zero-visibility cop number for various classes of graphs, where approximate refers to the restriction to polynomial time computable strategies.

\keywords{Pursuit-evasion game, zero-visibility cops and robbers, graph cleaning, separators, pathwidth, capture time, approximation}
\end{abstract}

\section{Introduction}\label{sec:intro}

Pursuit-evasion games model scenarios in which a group of pursuers attempts to capture an evader within a defined `arena'. One of the most extensively studied pursuit-evasion games played on graphs is the \textit{Cops and Robbers} game. This two-player game was introduced in \cite{nowakowski1983}  and \cite{quilliot1978jeux}, as a perfect-information game on a finite simple graph \(G\). 
To begin, the \(k\) cops choose starting vertices; after seeing them, the robber chooses a vertex. Then in each round, any cop may stay or move to a neighbour and, then the robber moves likewise. Multiple cops may share a vertex, and all positions are visible to the players. The cops win if some cop ever occupies the robber’s vertex; otherwise the robber evades forever.
 
The original formulation actually considered a single cop ($k=1$) chasing a single robber and leads to a characterisation of graphs where the cop can always guarantee a capture, known as cop-win graphs. Since then, the game has been the subject of significant research, including generalizations involving multiple cops, robbers, and variations in rules. For a comprehensive overview of the topic, we refer the reader to the book by Bonato \& Nowakowski \cite{bonato2011game}.
There are many variants of the 
Cops and Robbers game, distinguished by the restrictions placed on the cops and the robber.  It is natural to restrict the speed or edges that can be used by the cops or robber \cite{frieze2012variations,EnrightMPS23}. 
Another important variant restricts the cops’ information about the robber’s position during the game, leading to the concept of \textit{\(\ell\)-visibility Cops and Robbers} \cite{clarke2020limited}, where the cops can detect the robber only within a limited distance.

In this paper, we focus on the \textbf{\textit{zero-visibility Cops and Robbers}} variant, in which the cops have no information about the robber’s position at any point during the game. This version was first introduced in \cite{Tosic} and later investigated further in \cite{dereniowski2013,dereniowski2015zero}. 
It can be viewed as a special case of the 
 \textit{\(\ell\)-visibility Cops and Robbers} with parameter \(\ell\) equal to $0$. A characterisation for the \textit{\(\ell\)-visibility cop number} of a tree based on its structure was given in \cite{clarke2020limited} and shows that the difference between the 
$\ell$-visibility cop number and the 
cop number can be arbitrarily large. 
  Moreover, both variants of the game are closely related to the \textit{limited visibility graph search problem}, in which the cops move at speed one and the robber has infinite speed. This problem was recently introduced in \cite{kehagias2021algorithm}, providing an algorithm for capturing the robber under these conditions.

The problem of estimating the \emph{zero-visibility cop number}, i.e.\ the minimum number of cops to win the zero-visibility Cops and Robbers game, was initiated in \cite{tovsic1985vertex}. Tang \cite{tang2005cops} gives lower bounds for connected graphs, showing that the zero-visibility cop number exceeds half of the minimum vertex degree of the graph, also establishing the connection to the pathwidth, and characterisation of zero-visibility cop-win graphs. Moreover, \cite{tang2005cops} constructs a graph with an arbitrarily large \textit{zero-visibility cop number} and capture the robber using an optimal path decomposition. Later, a class of graphs on which two cops are sufficient in the \textit{zero-visibility Cops and Robbers} game was provided in \cite{jeliazkova2006aspects}.

In \cite{dereniowski2015zero}, the authors studied monotonic strategies to capture the robber, making a connection between the \textit{zero-visibility cop number}, its monotone variant, and the pathwidth of a graph. They proposed an algorithm for trees, and exhibited graphs of zero-visibility cop number two with arbitrarily large pathwidth.
The same authors, later in \cite{dereniowski2015complexity},  addressed the computational complexity of the \textit{zero-visibility Cops and Robbers} game, presenting a linear-time algorithm to compute the \textit{zero-visibility cop number} of a tree. They also showed that the corresponding decision problem is NP-complete on a non-trivial class of graphs.
 
In \cite{xue2019partition}, the authors developed a partition method to establish lower bounds on the \textit{zero-visibility cop number} in graph products. Subsequently, they analysed the \textit{zero-visibility Cops and Robbers} on graph joins, lexicographic products, complete multipartite and split graphs. In these studies, they refined the lower bounds on the \textit{zero-visibility cop number} for these classes of graphs compared to \cite{tang2005cops} and established a connection to the graph matching number.
In \cite{yang2022one} the authors provide an algorithm for computing optimal cop-win strategies on a tree in the \textit{one-visibility Cops and Robber} game and show the relationship between the \textit{one-visibility cop number} and \textit{zero-visibility cop number} of trees. 

For arbitrary graphs, the best-known strategy for bounding the \textit{zero-visibility cop number} is based on the initial idea of pathwidth decomposition \cite{tang2005cops}. However, the  weaknesses of the pathwidth solution in respect to capture time, computational complexity and quality of the approximation has not been addressed in the literature and motivated our study for alternative solutions, for example based on separators that have been used in full visibility case, see \cite{JoretKT10,LohO17}.

In this paper, we design an alternative algorithm based on a separation hierarchy that improves the capture time for a robber without an increase to the zero-visibility cop number in most cases. More specifically, our algorithm provides \( O\big(D \cdot \frac{n}{f(n)}\big) \)  capture time compared to \( O\left(D \cdot n\right) \) of the \texttt{PW} algorithm in any hereditary \(f\)-separable class of graphs with diameter \(D\).
Also, since the optimal pathwidth decomposition is NP-complete, with the alternative separation hierarchy approach we can improve the approximate zero-visibility cop number required to win in several classes of graphs.
Our approach provides a wider applicability and improvement facilitated by the existence of better approximation algorithms for computing the separation sets of graphs.
Moreover, the approach allows us to cover classes of graphs where an optimal pathwidth decomposition, or a good approximation, has not been established yet.
 
As an example, we also demonstrate the applicability of the algorithm to the hyperbolic random graph.
Finally, the space complexity of pathwidth decomposition has only a trivial \( O(n^2) \) upper bound, and with the alternative approach we show that $O(n)$ space is enough. 

 Proofs of all claims can be found in the Appendix.

\section{Preliminaries}\label{sec:prelm}

 \paragraph{Graphs} 
  We consider finite undirected graphs $G=(V,E)$, without loops or multiple edges, and let $V(G)$, or just $V$ if clear from context, be the vertex set of $G$, and $E(G)$ or $E$ be the edge set. 
  The diameter $D_G$ of a graph $G = (V,E)$ is the maximum distance between any pair of vertices, 
  where, the distance between two vertices is the minimum number of edges on any $u$-$v$ path.  
 Given a graph $G$ and a vertex subset $S\subseteq V$, we let $G[S]$ be the graph induced by $S$, i.e.~the graph with vertex set $S$ and edge set $\{uv \in E: u,v \in S\}$.
  A class of graphs is \emph{hereditary} if it is closed under taking induced subgraphs, i.e.~for any hereditary class $\mathcal{C}$ of graphs, if $G\in \mathcal{C}$ then $H \in \mathcal{C}$ for all induced subgraphs $H$ of $G$.

A \textit{path decomposition} of a graph $G=(V,E)$ is a sequence $B = (B_1, \dots, B_r)$ of subsets of $V$ (or `bags') if the following conditions are satisfied:
1) $\bigcup_{i =1}^{r} B_i = V$
2) for every edge $vw \in E$ there exists $i \in {1,...,r}$ with $\{v,w\} \subseteq B_i$
3) For $1 \leqslant i \leqslant j \leqslant k \leqslant r, B_i \cap B_{k} \subseteq B_{j}$.
%
In other words, in the path decomposition, every vertex appears in at least one bag; for every edge, both endpoints appear together in some bag and for any vertex, the bags containing it appear consecutively along the path.
The \textit{width} $\pw(B)$ of $B$ is defined as $\max_{i \in [r]} |B_i| -1$. 
The \textit{path\-width} $\pw(G)$  of $G$ is the minimum width of any path decomposition of $G$.

Let $\tfrac12 \leq \alpha < 1$ be a real number, $s\geq 0$ an integer, and $G=(V,E)$ a graph. A subset $S\subset V$ is said to be an $(s,\alpha)$-\textit{separator} of $G$, if there exist subsets $A,B \subset V$ such that
    1) $V = A \cup B \cup S$ and $A, B, S$ are pairwise disjoint;
    2) $|S| \leq s$, $|A|, |B| \leq \alpha |V|$; and
    3)$|\{ab\in E : a\in A, b\in B \}| = 0$.
We will refer to $A,B$ as \emph{separated sets} and $S$ as the \emph{separator}. 
The \textit{separation number} $\s(G)$ of $G$ is the smallest $s$ such that all subgraphs $G'$ of $G$ have an $(s,2/3)$-separator. For a function $f:\mathbb{N}\rightarrow \mathbb{N}$, we say that a class $\mathcal{C}$ is \emph{$f$-separable} if any $n$-vertex graph $G\in \mathcal{C}$ satisfies $\s(G) \leq f(n)$.  

\begin{figure}[t]
    \centering
    \includegraphics[width=.9\textwidth]{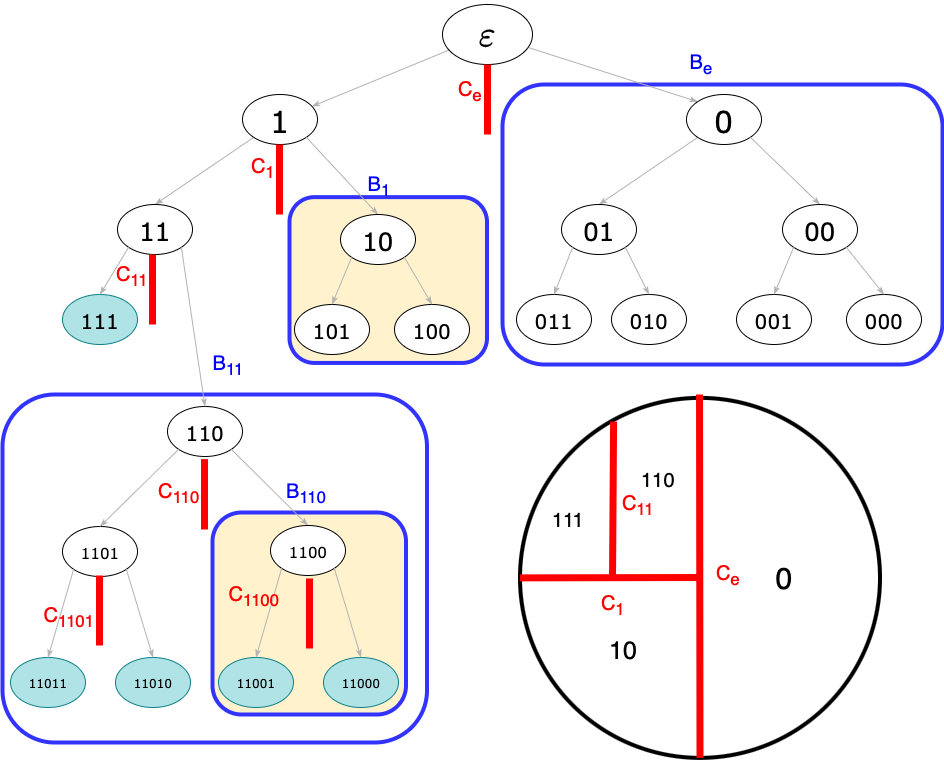}
    \caption{
    The \texttt{STT} algorithm: Separation Tree, Stack Updates, and Graph Partitioning.}
    \label{fig:example}
\end{figure}
By a \textit{separation algorithm}, we refer to an algorithm \(\mathcal{S}\) that takes a graph \(G = (V, E)\) as input and returns subsets \(A\), \(B\), and \(S\) of \(V\), such that \(S\) is an \((s, \alpha)\)-separator of \(G\). 
We say that an \emph{$f$-separable} class \(\mathcal{C}\) has an \textit{\((f, T(n))\) - separation algorithm} if there exists a separation algorithm \(\mathcal{S}\) such that for any \(n\)-vertex graph \(G \in \mathcal{C}\), the algorithm returns an \((f(n), 2/3)\)-separator of \(G\) in time at most \(T(n)\), where \(f(n)\) and \(T(n)\) are functions from \(\mathbb{N} \to \mathbb{N}\).
As an example, in our notation, the celebrated result of Lipton-Tarjan \cite{LiptonT80} gives an \textit{$(O(\sqrt{n}),O(n))$-separation algorithm} for planar graphs. 
%
 %

Let \(G\) be an \(n\)-vertex graph and \(g(n)\geq 0\) be a function that defines a threshold based on \(n\). 
The \textit{binary separation tree} $\mathcal{T}_{\mathsf{sep}}$
of a graph $G$, with respect to a separation algorithm $\mathcal{S}$ and a threshold function $g(n)$, is defined as follows: the nodes of $\mathcal{T}_{\mathsf{sep}}$
are subsets of $V(G)$, derived by starting with the root node $\varepsilon$ corresponding to the subset $V(G)$. 
For each node \( \omega \in \mathcal{T}_{\mathsf{sep}} \), if \( |\omega| > g(n) \), we apply \( \mathcal{S} \) to \( G[\omega] \) to obtain subsets \( \omega \cdot``1" \), \( \omega \cdot ``0" \), and \( \operatorname{Sep}(\omega) \). The subsets \( \omega \cdot``1" \) and \( \omega \cdot``0" \) are added as nodes in $\mathcal{T}_{\mathsf{sep}}$ if they are not already present. The children of \( \omega \in \mathcal{T}_{\mathsf{sep}} \) are defined as follows: \( \omega \) has children \( \omega \cdot``1" \) and \( \omega \cdot``0" \) if \( |\omega| > g(n) \) and the two are obtained by applying \(\mathcal{S}\) to \( G[\omega] \). A node \( \omega \) is a leaf in $\mathcal{T}_{\mathsf{sep}}$ if and only if \( |\omega| \leq g(n) \), indicating that \( G[\omega] \) cannot be separated further by \( \mathcal{S} \). See \cref{fig:example} for an example of a binary separation tree.
%
%
This is similar to the notion of \textit{separator hierarchy}, when the iterative separation is continued until the separated sets are of size one, see~\cite{blasius2016hyperbolic}.

\paragraph{Cops and Robbers}

The original cops and robber game \cite{AIGNER19841} is played on an undirected $n$-vertex graph $G = (V, E)$ with two players: one controls a set of $k \geq 1$ cops, indexed by the integers $0, \ldots, k-1$, and the other controls a single robber. We will generally refer to the moves of the cops and robbers rather than to the players themselves. In round 1, the cops choose their initial positions on the graph, occupying a set of vertices $c_0(0), \ldots, c_{k-1}(0)$. The robber then selects an unoccupied vertex $r(0)$ to begin. In each subsequent round $t = 2, 3, \ldots$, the cops move first, with each cop moving from their current vertex $c_i(t-1)$ to a vertex $c_i(t) \in N^{+}(c_i(t-1))$, where $N^{+}(v) := N(v) \cup \{v\}$ is the closed neighbourhood of vertex $v$ and $N(v) := \{ u \in V \mid(v, u) \in E \}$ is the open neighbourhood of vertex $v$.
The robber then moves in a similar way, choosing a new position $ r(t) \in N^{+}(r(t-1))$. If a cop or the robber chooses to remain on their current vertex, this is referred to as a pass. We will refer to this variant of the game as the \textit{classic Cops and Robbers} game. In the version where cops are allowed to move to any vertex $c_i(t) \in V$ of the graph $G$, the game is called \emph{node search} (or \emph{teleporting Cops and Robbers} game) \cite{WAGNER2015107,peng2000edge} i.e.~the cops are not restricted to moving only to neighbouring vertices, while the robber continues to move along the edges of the given graph as before.

The original game is played with perfect information: both the cops and the robber know each other's positions at all times. The cops win if, in some round, one of them moves to the vertex occupied by the robber, thereby capturing the robber. In contrast, the robber wins if they can avoid capture indefinitely. Both the cops and the robber are assumed to play optimally: the cops aim to minimise the number of rounds until capture, while the robber seeks to maximise it. The cop number of a graph $G$, denoted by $\cop(G)$, is the minimum number of cops required to guarantee the capture of the robber in $G$.

A graph $G$ is said to be a $k$ cop-win graph if it admits a cop strategy that guarantees capture using at most $k$ cops. 
The \textit{capture time  of a strategy} is defined as the maximum number of rounds until capture is achieved \cite{BrandtEUW17}. 
The \textit{capture time of  a graph $G$}, $\capt(G,k)$, is then defined as the minimum capture time of any cop strategy on $G$ with $k$ cops. The \textit{capture time of a graph} was first considered by Bonato, Golovach, Hahn \&  Kratochv\'il \cite{BonatoGHK09}, who mostly considered the \textit{capture time of cop-win graphs}. 
The \textit{capture time of planar graphs} using three cops was shown to be at most $2n$ in \cite{PisantechakoolT16}. One can show that if the \textit{capture time} by $k$ cops is finite, then it is at most $O(n^{k+1})$; surprisingly, this bound is tight \cite{BrandtEUW17}.
We study a variant of the classic Cops and Robber game on a graph \( G = (V, E) \), known as the \textit{zero-visibility Cops and Robbers} game, first introduced in \cite{Tosic} and further explored in \cite{dereniowski2013,dereniowski2015zero,tang2005cops} where the bounds on the zero-visibility cop number, denoted by $\cop_0(G)$, in terms of path width were established, and an algorithm was developed for capturing a robber. 
In this version, the robber is invisible to the cops, who can only detect and capture the robber by occupying the same vertex. The cops move without knowing the location of the robber, while the robber, who sees the cops, moves optimally to evade capture. Similarly, we let $\capt_0(G,k)$ be the capture time by $k$ cops in the zero-visibility setting. 
\paragraph{A Cops and Robbers Path Width {\upshape(\texttt{PW})} algorithm}
In \cite{tang2005cops}, a strategy to capture the robber using an optimal path decomposition has been proposed, which we refer to further as the \texttt{PW} algorithm. 
The minor modification of the algorithm was used in \cite{dereniowski2015zero} to study the monotonic strategies 
and investigate the connection between the monotone zero-visibility cop number, the zero-visibility cop number, and the pathwidth of a graph.

The \texttt{PW} algorithm is based on finding a path decomposition \( B = \{B_1, \dots, B_r\} \) of a connected graph \( G \) such that \( \pw(B) = \pw(G) \), with the additional condition that $\forall i \in \{1, \dots, r-1\}$, the sets \( B_i \setminus B_{i+1} \), \( B_{i+1} \setminus B_i \), and \( B_i \cap B_{i+1} \) are not empty. In the algorithm, \( k = \pw(G) \) cops are used, which move from the bag \( B_i \) to \( B_{i+1} \) for each \( i \in \{1, \dots, r-1\} \) in such a way that the robber's territory \( R_i \) monotonically decreases after each round \( i \). 
Initially, the \( k \) cops are placed on the vertices of \( B_1 \). If \( k > |B_1| \), then multiple cops can be placed on the same vertex of \( B_1 \). The algorithm then proceeds by sequentially moving the cops from \( B_i \) to \( B_{i+1} \), so that they are moved from \( B_i \cap B_{i+1} \) to \( B_{i+1} \setminus B_i \), ensuring that the robber cannot enter \( B_i \) for any \( i \in \{1, \dots, r\} \). The movement in each step can be described as follows: for each \( i \in \{1, \dots, r-1\} \), the cops are moved from \( B_i \cap B_{i+1} \) to \( B_{i+1} \setminus B_i \), while the robber cannot move into \( B_i \). 
The algorithm halts when the robber is caught in any round \( i \). Since the robber’s territory \( R_i \) monotonically decreases with each round \( i \), in the worst case, the algorithm will need to pass through all \( r \) bags of the decomposition. 

Neither \cite{dereniowski2015zero} nor \cite{tang2005cops} consider using this algorithm to bound the capture time, however a simple analysis of the algorithm gives the following result. 

\begin{proposition}\label{prop:pw_capt_time_alg}
  The capture time of the {\upshape\texttt{PW}} algorithm with \(\pw(G)\) cops for a graph \( G \) with \( n \) vertices and diameter \( D_G \), is bounded above by \( O(D_G \cdot n) \) and the space complexity by $O(n^2)$.
\end{proposition}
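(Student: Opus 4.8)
The plan is to bound the capture time by counting how many bag-to-bag transitions the \texttt{PW} algorithm performs and how many rounds each transition costs, and then multiplying. For the first factor I would bound the number $r$ of bags in the decomposition $B=(B_1,\dots,B_r)$ used by the algorithm: since this decomposition is required to satisfy $B_{i+1}\setminus B_i\neq\emptyset$ for every $i$, and since the consecutiveness property of path decompositions makes the sets $B_2\setminus B_1,\dots,B_r\setminus B_{r-1}$ pairwise disjoint with union $V\setminus B_1$, we get $r-1\le\sum_{i=1}^{r-1}|B_{i+1}\setminus B_i|=n-|B_1|\le n-1$, so $r\le n$ and the algorithm goes through at most $n-1$ transitions.

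Next I would bound the number of rounds spent on a single transition $B_i\to B_{i+1}$. The cops covering $B_i\cap B_{i+1}$ stay in place, while the remaining cops must move onto $B_{i+1}\setminus B_i$; I would have each of these cops walk a shortest path in $G$ to its assigned target vertex, all of them moving simultaneously and then passing once they arrive. Since every distance in $G$ is at most $D_G$, after at most $D_G$ rounds every vertex of $B_{i+1}$ is occupied. Throughout these rounds the cops that stayed on $B_i\cap B_{i+1}$ still separate the already-cleared region $\bigcup_{j\le i}B_j$ from the rest of $G$, so — using the monotone shrinking of the robber's territory $R_i$ that the \texttt{PW} strategy guarantees \cite{tang2005cops,dereniowski2015zero} — the transition is completed in at most $D_G$ rounds with no loss of progress. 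Concatenating the at most $n-1$ transitions (the cops' initial appearance on $B_1$ is free, since they choose their starting vertices) then shows that $R_i$ has been emptied, hence the robber caught, within $(n-1)D_G=O(D_G\cdot n)$ rounds.

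For the space complexity I would simply observe that the algorithm keeps the decomposition $B$ explicitly in memory — at most $n$ bags, each a subset of $V$, hence $O(n^2)$ words — together with the input graph and a constant number of length-$O(n)$ auxiliary arrays (the current cop positions and the BFS queues used to compute the relocation walks). This is $O(n^2)$ in total, the quadratic term coming entirely from the uncompressed list of bags.

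The step I expect to be the main obstacle is the single-transition analysis: one has to be sure that letting cops wander along shortest paths through the still-contaminated part of $G$ cannot let the invisible robber slip past the guarded set $B_i\cap B_{i+1}$ and re-contaminate cleared vertices, and that $B_{i+1}\setminus B_i$ is genuinely clean once it is occupied. These containment properties, however, are precisely the correctness guarantees of the \texttt{PW} strategy already established in \cite{tang2005cops,dereniowski2015zero}; the only new content here is the quantitative accounting above, which is why the statement is preceded by the remark that a simple analysis suffices.
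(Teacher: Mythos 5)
Your proposal is correct and follows essentially the same route as the paper's proof: bound the number of bags by $O(n)$, charge at most $D_G$ rounds to each bag-to-bag transition, and account for $O(n^2)$ space by storing all $O(n)$ bags of size $O(n)$ each. Your justification that $r\le n$ via the pairwise disjointness of the sets $B_{i+1}\setminus B_i$, and your explicit appeal to the correctness guarantees of \cite{tang2005cops,dereniowski2015zero} for the containment during transitions, merely make precise steps the paper leaves implicit.
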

\begin{proof}
Following the \texttt{PW} algorithm in \cite{dereniowski2015zero,tang2005cops} 
    we move the cop team from one bag \(B_i\) to the next bag \(B_{i+1}\) following a path decomposition \(B = (B_1, \dots, B_r)\) for all \(i \in {1,...,r-1}\). The number of bags in the path decomposition is bounded above by \(O(n)\), and to move a cop from \(B_i\) to \(B_{i+1}\), we need \( D_G \) moves.
    Since the size of each bag    can be as large as $O(n)$, and the total number of bags is bounded above by $n$, we need to store all bags to follow the \texttt{PW} algorithm. So, the \texttt{PW} algorithm requires $O(n^2)$ memory. \qed

\end{proof}

\section{Capturing invisible robber using separation hierarchy}
We present an algorithm for capturing a robber based on the traversal of the separation tree which is an alternative to the existing algorithm based on a pathwidth decomposition \cite{dereniowski2015zero}. 
During the design of the overall strategy, we first reason with respect to the node search (or teleporting Cops and Robbers) game \cite{WAGNER2015107}. In the node search model, cops are allowed to be placed on or removed from any vertex (i.e.~not restricted to adjacent moves), while the robber still moves along edges. Then, when we analyse the capture time and other aspects, we add back in the travel times of the cops (restricted to travel over edges).

The core idea of the algorithm is to iteratively apply a separation algorithm $\mathcal{S}$ that partitions the graph $G$ into three parts: a separator of size at most $f(n)$, and two subgraphs with no edges between them.  The separator is guarded by $f(n)$ cops, and two separate sets can be cleaned independently and recursively. Our ``Separation Tree Traversal'' (STT) algorithm operates on a separation hierarchy of the graph, but it does not require building the full separation tree in advance. Instead, the algorithm incrementally traverses this implicit hierarchy, which can be represented by tree structure, cleaning one subgraph at a time. At each step, it uses separators to isolate ``uncleaned'' regions, placing cops as static guards on the separator vertices to prevent the robber from moving into the already cleaned parts of the graph. If at any point an isolated uncleaned subgraph $G'$ of size at most $f(n)$ remains, it can be cleaned directly by deploying $f(n)$ cops to cover all its vertices simultaneously. 
This process is monotonic, so once a region of the graph is cleaned, it remains secure, and the robber cannot re-enter it. By maintaining this invariant controlling the cops' movements
and use of separators, the algorithm ensures that the entire graph is eventually cleaned and the robber is captured. We demonstrate the correctness of the algorithm and evaluate the capture time for a robber on a graph as well its time/space complexity by analysing a recursive function that estimates the number of vertices in the binary separation tree of the graph.
 
\subsection{Separation Tree Traversal (STT) algorithm}
 
The Separation Tree Traversal Algorithm (\texttt{STT}), which is given in \cref{alg:STT}, takes as input an $n$-vertex graph $G \in \mathcal{C}$, where $\mathcal{C}$ is an $f$-separable hereditary class, and defined with respect to a separation algorithm $\mathcal{S}$ and had a related separation tree $\mathcal{T}_{\mathsf{sep}}$.
In $G$, an unknown vertex $u$ contains a robber. It gives a sequence of cop moves, moving $O(f(n))$ cops during an iteration, which guarantee the capture of an invisible robber.

The \texttt{STT} algorithm consists of three functions: \texttt{Scheduler}, \texttt{Separate}, and \texttt{Clear}.
The main component of the \texttt{STT} algorithm is the \texttt{Scheduler}, which calls the \texttt{Separate} and \texttt{Clear} methods during its execution. \texttt{Scheduler} uses two stacks: {\sc Stack-B} which stores separated sets; and {\sc Stack-C} containing  separators. The algorithm terminates after monotonically after clearing all vertices.  

The \texttt{Clear}$(H)$ function simply moves a cop to each vertex of the input graph $H$ for a round then removes them, catching any robber that is confined to $H$. 

The input to \texttt{Separate} is an induced subgraph $H\subseteq G$, the index $V(H)$, and an integer $n$ (the size of the original graph $G$). If $|V(H)|\leq f(n)$, then \texttt{Clear} is called to catch any potential robber in $H$.  Otherwise $\mathcal{S}(H)$ is called, that is, the separation algorithm partitions $V(H)$ into three parts $A, B, C$. The separated subsets $A,B$, with indices extending that of $V(H)$ by a $0$ or $1$, are placed on {\sc Stack-B}, and the separator $C$, with the index of $V(H)$, is placed on {\sc Stack-C}.

The \texttt{Scheduler} uses two stacks, {\sc Stack-B} and {\sc Stack-C} where it stores separated sets and separators obtained by the separation algorithm $\mathcal{S}$, respectively. Each of these sets is indexed by its place in the separation tree, stored is a binary string. {\sc Stack-B} stores sets of separated vertices, and their indices, which have yet to be cleared.  
The subsets of vertices contained in {\sc Stack-B} are disjoint. Similarly, the disjoint separators, corresponding to sets in {\sc Stack-B}, are stored together with their indices in {\sc Stack-C}. The idea is that cops are temporarily placed on vertices of sets in $C$ to restrict movement of the robber between the two corresponding separated sets (from {\sc Stack-B}) so that these two separated sets can be cleaned independently. The role of the scheduler is to keep track (via the indices and stacks) of which sets are cleaned and assign cops accordingly.

The \texttt{Scheduler} starts by initially setting {\sc Stack-C} to be empty, and {\sc Stack-B} to contain the set $V(G)$ with index $\varepsilon$ (the empty string). The outermost while loop of the \texttt{Scheduler} runs until the  {\sc Stack-B} is empty, i.e.\ when there are no `uncleared' vertices. If {\sc Stack-B} is not empty, then the top element $U$ is popped. Its next step is to check if cops can be pulled off some of the separators. It does this by seeing if the separator at the top of {\sc Stack-C} separates the $U$ from its sibling in the separation tree, as if not then both of its corresponding separated sets must already be clear -  so the separator can be popped and its cops removed. The   \texttt{Scheduler} then calls \texttt{Separate} on $G[U]$ and starts the loop again.


Given a string $w = w_1w_2\cdots w_n$ the function $\operatorname{DelLast}(w)$ returns the string $w_1w_2\cdots w_{n-1}$, i.e.\ it simply removes the last character from $w$. 
\vspace{-2em}
\begin{algorithm}
\small
\footnotesize 
    \caption{\sl{Separation Tree Traversal} of $G$ with Separation Algorithm $\mathcal{S}$}\label{alg:STT}
    \begin{algorithmic}    
    \vspace{0.5em}
    \STATE {\textbf{Function} \texttt{Scheduler(G)}:\vspace{-1em}
    \begin{addmargin}[1em]{2em} 
       \STATE {\sc Stack-B}.push$((V(G)$, $\varepsilon))$ \hfill\textit{\%Place $(V(G)$, $ \varepsilon)$ in {\sc Stack-B}}
        
        \WHILE {{\sc Stack-B} is not empty}
            \STATE $(U, index_U) \gets $ {\sc Stack-B}.pop() \hfill \textit{\%Get (U, $index_U$) from top of {\sc Stack-B}}
              \STATE $pre \gets \operatorname{DelLast}(index_U)$ \hfill \textit{\%Remove last character from $index_U$}
            \IF{{\sc Stack-C} is not empty}
                    
            \WHILE{$pre \neq$ to the index of {\sc Stack-C}.peek() }
                \STATE $(S, index_S ) \gets ${\sc Stack-C}.pop()  \hfill  \textit{\%Pop from {\sc Stack-C} }
                
                Remove cops from all vertices in $S$  
            \ENDWHILE
        \ENDIF
            \STATE \texttt{Separate}(G[U], $index_U$, |V(G)|)
        \ENDWHILE
\end{addmargin}

\vspace{0.2cm}
\STATE {\textbf{Function} \texttt{Separate($H$, $index$, n)}:}\vspace{-1em}
\begin{addmargin}[1em]{2em}
      \IF {$|V(H)| \le f(n)$} 
      \STATE \texttt{Clear}($H$) \hfill \textit{\%Clear $H$}
      \ELSE 
            \STATE $(A,B,C) \gets \mathcal{S}(H)$         \hfill \textit{\%Call Separation algorithm on $H$}
            \STATE Place cops in $C$ 
            \STATE {\sc Stack-C}.push($(C$, $w)$) \hfill \textit{\%Push $(C$, $w)$ onto {\sc Stack-C}}
            \STATE {\sc Stack-B}.push($(B, w\cdot 0)$) \hfill \textit{\%Push $(B$, $w\cdot 0)$ onto {\sc Stack-B}}$\!\!$
            \STATE {\sc Stack-B}.push($(A, w\cdot 1)$) \hfill \textit{\%Push $(A, w\cdot 1)$ onto {\sc Stack-B}}
        \ENDIF
\end{addmargin}
\vspace{0.2cm}
{\textbf{Function} \texttt{Clear}($H$):}}\vspace{-1em}
\begin{addmargin}[1em]{2em}
    \STATE Place cops in all vertices of $H$
    \STATE   Remove cops from all vertices of $H$
\end{addmargin}
    \end{algorithmic}
\end{algorithm}

\subsection{Correctness of the STT algorithm}

Let us consider a node $w$ of a binary separation tree $\mathcal{T}_{\mathsf{sep}}$ with respect to the separation algorithm $\mathcal{S}$ applied to a graph $G$ that corresponds to the subset of $V(G)$. Clearly, the node can be denoted by a binary string. Upon applying \( \mathcal{S} \) to \(G[w] \), three  sets are obtained: \( A_w \), \( B_w \), and \( C_w \). In our notation, the set \( A_w \) is denoted as a sting \( w \cdot ``1" \), the set \( B_w \) as a sting \( w \cdot ``0" \), and the set \( C_w \) as \( \operatorname{Sep}(w) \), respectively. The following lemma is key to the analysis of the \texttt{STT} algorithm. Note that we take $x_1$ to be the empty string $\varepsilon$.
\begin{lemma}\label{lem:key}
 Let a robber be on some vertex of a 
 graph $G$ and $x_1\cdots x_n \in \{0,1\}^*$, where $n\geq 1$ and $\{0,1\}^*$ is the family of all binary strings.
 If there is a cop at each vertex of $\cup_{i=1}^n \operatorname{Sep}(x_1\cdots x_i)$, then the robber in $x_1\cdots x_n \cdot ``1"$ or $x_1\cdots x_n\cdot ``0"$ cannot leave their respective set without being caught.
 \end{lemma}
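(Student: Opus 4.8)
The plan is to argue by induction on $n$, using the definition of the separation tree and the key structural property that a separator $\operatorname{Sep}(w)$ has no edges between the two separated sets $w\cdot``1"$ and $w\cdot``0"$. First I would set up the base case $n=1$: here $x_1 = \varepsilon$, so the robber sits in either $\varepsilon\cdot``1"$ or $\varepsilon\cdot``0"$, i.e.\ one of the two separated sets $A,B$ obtained by applying $\mathcal{S}$ to $G[\varepsilon]=G$. By property~(3) of an $(s,\alpha)$-separator there are no edges between $A$ and $B$, so the only way out of $A$ (resp.\ $B$) is through a vertex of the separator $C = \operatorname{Sep}(\varepsilon)$. Since a cop sits on every vertex of $\operatorname{Sep}(x_1) = \operatorname{Sep}(\varepsilon)$, the robber would be captured the moment it steps onto such a vertex; hence it cannot leave its set without being caught.

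For the inductive step, suppose the statement holds for strings of length $n-1$, and consider $x_1\cdots x_n$ with a cop on every vertex of $\bigcup_{i=1}^{n}\operatorname{Sep}(x_1\cdots x_i)$. A robber located in $x_1\cdots x_n\cdot``1"$ or $x_1\cdots x_n\cdot``0"$ is in particular located in $x_1\cdots x_n$ (since by construction the children $w\cdot``1"$ and $w\cdot``0"$ of a node $w$ are subsets of $w$ — indeed $A, B \subseteq V(H)$ in the definition of a separator). Now I would observe two things. First, to move \emph{within} $x_1\cdots x_n$, the robber cannot cross from the $``1"$-child to the $``0"$-child (or vice versa) without passing through a vertex of $\operatorname{Sep}(x_1\cdots x_n)$, because there are no edges between the two separated sets; and every vertex of $\operatorname{Sep}(x_1\cdots x_n)$ carries a cop, so such a move results in capture. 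Second, to \emph{leave} $x_1\cdots x_n$ altogether: the set $x_1\cdots x_n$ is itself one of the two separated sets obtained by applying $\mathcal{S}$ to $G[x_1\cdots x_{n-1}]$, its sibling being $x_1\cdots x_{n-1}\cdot\overline{x_n}$ and their separator being $\operatorname{Sep}(x_1\cdots x_{n-1})$; there are no edges from $x_1\cdots x_n$ to its sibling, so any step out of $x_1\cdots x_n$ must land on a vertex of $\operatorname{Sep}(x_1\cdots x_{n-1})$, which again carries a cop. Applying the induction hypothesis to $x_1\cdots x_{n-1}$ (whose separators are all guarded, being a subfamily of the given ones) then shows that even the broader region $x_1\cdots x_{n-1}$ cannot be left without capture, closing off any escape.

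Combining these: from its current set the robber's only neighbours are (i) vertices in the same set, (ii) vertices in $\operatorname{Sep}(x_1\cdots x_n)$, or (iii) vertices in $\operatorname{Sep}(x_1\cdots x_{n-1})$ and beyond; cases (ii) and (iii) are guarded, so any move leaving the set is fatal. This is exactly the claim.

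The main obstacle I anticipate is purely bookkeeping rather than conceptual: making precise that $w\cdot``1"$ and $w\cdot``0"$ are genuinely subsets of $w$ and that ``the only way out'' of a separated set in $G[w]$ is through $\operatorname{Sep}(w)$ \emph{as a subgraph of $G$}, not merely of $G[w]$ — one needs that the separator property of $\mathcal{S}(G[w])$, together with the nesting of the sets down the tree, implies the robber genuinely cannot jump to a vertex outside $w$ except via some $\operatorname{Sep}(x_1\cdots x_i)$ with $i\leq n$. This is handled by the nested induction on the ancestor chain $x_1, x_1x_2, \dots, x_1\cdots x_n$, but care is needed to state the invariant (``the robber confined to $x_1\cdots x_j$ cannot escape $x_1\cdots x_j$'') at the right level of generality so that the induction actually goes through.
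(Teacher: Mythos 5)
Your proof is correct and follows essentially the same route as the paper's: induction on the length of the string, with the base case handled by the separator property of $\mathcal{S}(G)$ and the inductive step combining the induction hypothesis (which confines the robber to $x_1\cdots x_n$, one of the two separated sets of $G[x_1\cdots x_{n-1}]$) with the newly guarded separator $\operatorname{Sep}(x_1\cdots x_n)$ to prevent crossing between the two children. Your additional bookkeeping about the nesting of sets down the tree is a slightly more explicit rendering of what the paper leaves implicit, but the argument is the same.
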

 
\begin{proof} 
  We prove the lemma by induction on \(n\).  
  For the base case \(n = 1\), we apply the separation algorithm \(\mathcal{S}\) to \(G\) and obtain three sets \((1, 0, \operatorname{Sep}(\varepsilon))\). Now place one cop on each vertex of the set \(\operatorname{Sep}(\varepsilon)\). Since \(\operatorname{Sep}(\varepsilon)\) separates set \(0\) from \(1\), a robber in either of set \(0\) or \(1\) cannot move to the other set without being caught.

Assume the statement holds for \(n=k\). Thus, w.l.o.g.\ the robber is at a vertex in  \(G[x_1\cdots x_{k+1}]\) and cannot leave this subgraph without being caught by a cop in the set \(\cup_{i=1}^k \operatorname{Sep}(x_1\cdots x_i)\). Applying \(\mathcal{S}\) to  \(G[x_1\cdots x_{k+1}]\) gives three sets: \(x_1\cdots x_{k+1}\cdot``1"\), \(x_1\cdots x_{k+1}\cdot``0"\), and \(\operatorname{Sep}(x_1\cdots x_{k+1})\), where we place cops in the latter. Since the robber cannot leave  \(G[x_1\cdots x_{k+1}]\), if it was not caught by the placement of cops in \(\operatorname{Sep}(x_1\cdots x_{k+1})\), then it must be in one of the two sets \(x_1\cdots x_{k+1}\cdot``1"\), \(x_1\cdots x_{k+1}\cdot``0"\). However, by virtue of $\mathcal{S}$,  it cannot reach the other set without passing though \(\operatorname{Sep}(x_1\cdots x_{k+1})\), where it will be caught.  \qed
\end{proof}

Let the robber's territory in round \( i \geq 0 \) of the game be  the set of vertices in the graph \( G \) where a robber, which has not yet been caught, can possibly be located at the end of the round with respect to the cops' strategy. For example, in round \( i = 0 \), the robber's territory is given by
$ V(G) \setminus \{c_0(0), c_1(0), \dots, c_{k-1}(0)\}$.

\begin{theorem}\label{thm:correctness}
Let \( R_i \) be the territory of the robber after the \( i \)-th iteration of the outer while loop of {\upshape\texttt{Scheduler(G)}}.
Then, \( \forall  k \in \mathbb{N} \)  such that $|R_k|>0$, we have \( R_{k+1} \subsetneq R_k \). Furthermore, the {\upshape\texttt{STT}}~algorithm terminates in a finite number of steps with the robber being caught.
\end{theorem}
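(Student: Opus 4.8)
The plan is to induct on the iteration count $k$ of the outer while loop of \texttt{Scheduler}, carrying a loop invariant that pins down the two stacks and ties them to the robber's territory. Concretely, I would prove that at the start of every iteration: (a)~the index strings of the sets on {\sc Stack-B} form an antichain in the prefix order, and these sets are pairwise disjoint; (b)~{\sc Stack-C}, read bottom-to-top, is $\operatorname{Sep}(z_0),\dots,\operatorname{Sep}(z_j)$ for a chain $\varepsilon=z_0\sqsubset\cdots\sqsubset z_j$ with each $z_{i+1}$ extending $z_i$ by one bit, and every vertex of each $\operatorname{Sep}(z_i)$ carries a cop; (c)~every index on {\sc Stack-B} has all of its proper prefixes among $\{z_0,\dots,z_j\}$, i.e.\ every ancestor separator of every still-uncleared region is guarded; and (d)~the robber's territory $R_k$ equals the union of the sets currently on {\sc Stack-B} (the base case being the initial state, where {\sc Stack-B} holds only $(V(G),\varepsilon)$). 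Item~(d) is the payload; (a)--(c) are the bookkeeping needed to push the induction through.

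The heart of the inductive step is a ``no recontamination'' claim: at every instant, every edge of $G$ joining an already-cleared vertex (one lying in neither a {\sc Stack-B} set nor any $\operatorname{Sep}(z_i)$) to a set on {\sc Stack-B} has an endpoint in some $\operatorname{Sep}(z_i)$. Granting this, a robber confined to the {\sc Stack-B} sets---which it is, by \Cref{lem:key} together with (b) and (c)---can never re-enter the cleared region, so the territory only ever shrinks. To establish the claim I would verify it is preserved by each of the three actions inside an iteration. A \texttt{Separate} call replaces $G[U]$ on {\sc Stack-B} by $A_U\sqcup B_U=U\setminus\operatorname{Sep}(U)$ and pushes $\operatorname{Sep}(U)$ onto {\sc Stack-C}; the no-edge-across property of $\mathcal{S}$ keeps the claim true since $A_U,B_U\subseteq U$. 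A \texttt{Clear}$(U)$ call moves $U$ into the cleared region, and by (c) every edge leaving $U$ already passed through a guarded ancestor separator. Finally, the inner while loop pops $\operatorname{Sep}(z)$ off {\sc Stack-C} only once the whole subtree of $\mathcal{T}_{\mathsf{sep}}$ rooted at $z$ has been processed---this is precisely what ``$pre\neq$ the index of {\sc Stack-C}.peek()'' detects---so the vertices of $G[z]$ then form a fully-cleared block whose boundary in $G$ lies in the separators still below it on {\sc Stack-C}; here one uses that the node-sets of $\mathcal{T}_{\mathsf{sep}}$ form a laminar family, which is immediate from the recursive definition of the separation tree.

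With the invariant available, strict monotonicity is short: in iteration $k+1$ the popped set $U$ is a nonempty set on {\sc Stack-B}, hence $U\subseteq R_k$, and the iteration deletes from the territory either all of $U$ (the \texttt{Clear} branch) or the separator $\operatorname{Sep}(U)\subseteq U$ (the \texttt{Separate} branch), so $R_{k+1}=R_k\setminus(\text{a nonempty set})\subsetneq R_k$ whenever $R_k\neq\emptyset$; here I would note one may assume without loss of generality that $\mathcal{S}$ returns nonempty parts on inputs larger than $f(n)$, so $\operatorname{Sep}(U)\neq\emptyset$ on the relevant subgraphs. Termination then follows because $\mathcal{T}_{\mathsf{sep}}$ is finite: each child in the tree has at most $\tfrac23$ of its parent's vertices, so every root-to-leaf path has length $O(\log n)$ and the tree has finitely many nodes, while \texttt{Scheduler} pushes and pops each node of $\mathcal{T}_{\mathsf{sep}}$ to {\sc Stack-B} at most once. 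Hence the loop runs for finitely many iterations, halting exactly when {\sc Stack-B} is empty, and then $R=\emptyset$ by (d)---so no uncaught robber can exist, i.e.\ the robber has been captured (in the node-search model in which \texttt{Scheduler} is analysed; edge-travel times are reinstated in the later capture-time analysis).

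The step I expect to be the main obstacle is the no-recontamination claim, and within it the case of the inner while loop: showing that stripping cops off a separator never exposes an already-cleared region needs a careful description of the state of {\sc Stack-C} at each moment together with a DFS/laminar-family argument, rather than a one-line appeal to \Cref{lem:key}. Stating invariants (a)--(c) sharply enough that the inner loop's effect on {\sc Stack-C} is captured exactly---so that the prefix/index arithmetic around $\operatorname{DelLast}$ and $pre$ lines up---is the second place where care is required.
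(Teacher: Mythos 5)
Your proposal is correct and follows essentially the same route as the paper: an induction over iterations of the outer while loop carrying the invariant that the robber's territory equals the union of the sets on {\sc Stack-B}, with \Cref{lem:key} supplying the no-recontamination property and a per-iteration strict decrease giving termination. Your auxiliary invariants (a)--(c) and the explicit non-emptiness caveat for the separator make the stack bookkeeping (which the paper handles more informally) fully rigorous, but they do not change the underlying argument.
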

\begin{proof}Let $W$ denote the set of vertices contained in some set on {\sc Stack-B}. We will prove the following invariant (I) of the while loop holds by induction: $(I) \text{ At then end of the $i$th loop, $R_i=W$.}$ 

For the base case, $R_0=V(G)$, this holds as before the while loop begins $(V(G), \varepsilon)$ is pushed to the stack and no cops have been placed yet. 

Suppose that invariant (I) holds at the end of the $i$th iteration, and that $R_i\neq \emptyset $. Thus, by induction, {\sc Stack-B} is non-empty and the first element $(U,index_U)$ is popped from {\sc Stack-B}. The next step of the outer while loop, is a check to see if {\sc Stack-C} is empty, followed by an (inner) while loop. This inner while loop pops elements from {\sc Stack-C} until the top one is a separator $S$ corresponding to an output $(U,U',S)$ from $\mathcal{S}$, that is, $S$ separates $U$ from its sibling $U'$ in the separation tree. Note that a separator is only popped from {\sc Stack-C} if the two sets it separates have already been cleared. By \cref{lem:key}, and since all separators above $S$ in the separation tree are still on {\sc Stack-C}, we have that this step does not change $R_i$. The final step in the loop is to  call $ \texttt{Separate}(U, index_U, n)$.  During this call, either the whole set $U$ is cleared with cops (thus $R_{i+1} =R_i\setminus U$) and thus $(I)$ holds since $U$ was the only set of vertices removed from {\sc Stack-B}. Or $f(n)$ cops are placed on a separator of $G[U]$, and the two corresponding separated sets are added to {\sc Stack-B}. Thus also in this case, since the robber cannot be in the separator, invariant $(I)$ holds.  

Observe that the (potentially empty) union of the sets added during one iteration is a strict subset of the set $U$ popped at the start of the iteration, i.e.\ a positive number of vertices is removed in each iteration. Since the graph is finite it follows that the while loop terminates in a finite number of iterations. 

Since invariant $(I)$ holds, at the end of the final iteration the set $R_i=\emptyset $, and so the robber must be caught. Additionally, if $|R_i|>0$, then there will another iteration of the loop and so, by the above paragraph, we have $R_{i+1}\subsetneq R_i$. 

Finally, although the \texttt{STT} algorithm is defined for teleporting cops, using cops which travel over edges (waiting for them to arrive) does not affect correctness -  as the robber can't escape its current territory in the meantime.\qed
\end{proof}

\cref{thm:correctness} shows that for teleporting cops the \texttt{STT} algorithm gives a monotonic strategy in the sense of \cite{dereniowski2015zero}. However, for cops moving over edges the robber's territory may not be monotonically decreasing, as this depends on the paths taken by the individual cops when travelling between separator sets.

\subsection{Capture time, cop number and computational complexity}

In this subsection, we define a recursive function that provides an estimate of the number of vertices in the binary separation tree. The estimate will be used in the next subsection to derive upper bounds on the \texttt{STT} algorithm's computation time and memory usage, upper bounds on the cop number, and capture time associated with the \texttt{STT} algorithm.

For any $f:\mathbb{N}\rightarrow \mathbb{N}$ and $n\in \mathbb{N}_+$, let $\varphi(x):=\varphi_{f(\cdot),n}(x)$ denote the maximum number of nodes in the binary separation tree corresponding to subgraphs of size $x$, assuming we place $f(x)$ cops on each separator, and stop the recursion when we reach a graph of size $f(n)$. This mirrors what happens in Algorithm \ref{alg:STT}.
For simplicity, we assume each separation of a subgraph on $x$ vertices uses exactly $f(x)$ cops, even if the actual separator is smaller. This removes $f(x)$ vertices, leaving $x - f(x)$ vertices in two disconnected parts, each with at most $\frac{2x}{3}$ vertices.
Thus, the following recurrence holds: 
\begin{equation}\label{eq:treesize}\!\!\!\!\!\!\varphi(x) = \begin{cases} 0     &\text{for all }x< 1;\\
1    &\text{for all }1\leq x\leq f(n);\\ \max\limits_{\lceil x/3 \rceil \leq y\leq \lfloor  2x/3\rfloor} \{\varphi(y) + \varphi(x-f(x)-y) \}  + 1& \text{otherwise.} \end{cases} 
\end{equation}   
\noindent
The recurrence reflects two cases: when $x \leq f(n)$, the region can be cleaned directly with one full group of cops, so $\varphi_k(x) \leq  1$. For $x > f(n)$, we place $f(x)$ cops on a separator and split the remaining $x - f(n)$ vertices into two parts. The recurrence adds 1 for the current separator node, and recursively counts the nodes in the two resulting subtrees. The maximisation over $y$ ensures all $(f(x), 2/3)$-separators are considered. 

\newcommand{\septreesize}{ For any $f:\mathbb{N}_+\rightarrow \mathbb{N}_+$ and $n\in \mathbb{N}_+$, let $ \varphi_{f(\cdot),n}(x)$ given by \eqref{eq:treesize}. Then, for all $x\geq  f(n)$, 	\[\varphi_{f(\cdot),n}(x) \leq  \frac{6x}{f(n)} -1 .  \] }
\begin{proposition}\label{prop:treesize} 
\septreesize
\end{proposition}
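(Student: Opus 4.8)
## Proof Plan

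The plan is to prove the bound $\varphi_{f(\cdot),n}(x) \le 6x/f(n) - 1$ by strong induction on $x$, for all $x \ge f(n)$. Write $m := f(n)$ for brevity throughout, so the claimed bound is $\varphi(x) \le 6x/m - 1$. I would also abbreviate $\varphi := \varphi_{f(\cdot),n}$.

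\medskip

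\noindent\textbf{Base case.} For $f(n) \le x \le $ (something like) $3f(n)/2$ — or more simply, whenever the recursion does not actually branch — we need to be a bit careful since the recurrence only gives $\varphi(x) = 1$ for $x \le f(n)$, and for $x$ slightly above $f(n)$ the ``otherwise'' branch already applies. So the cleanest base case is just $x = f(n)$: there $\varphi(f(n)) = 1 \le 6f(n)/f(n) - 1 = 5$, which holds comfortably. For the induction to go through we will want the bound to be slack enough that the ``$+1$'' and the subtracted $f(x)$ terms are absorbed; the factor $6$ (rather than the naive $2$ or $3$ one might guess from the recursion tree) is presumably chosen precisely to create this slack.

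\medskip

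\noindent\textbf{Inductive step.} Fix $x > f(n)$ and assume the bound holds for all smaller arguments (and that $\varphi(y) \le 1$ whenever $y \le f(n)$, and $\varphi(y) = 0$ for $y < 1$). Let $y$ with $\lceil x/3 \rceil \le y \le \lfloor 2x/3 \rfloor$ achieve the maximum, so
\[
\varphi(x) = \varphi(y) + \varphi(x - f(x) - y) + 1.
\]
Both arguments $y$ and $x - f(x) - y$ are at most $\lfloor 2x/3 \rfloor < x$, so they are strictly smaller than $x$. The subtlety is that they may drop below $f(n)$, in which case we must use the ``$\le 1$'' clause rather than the inductive bound $6(\cdot)/m - 1$ — and one should check that the inductive bound, while perhaps not literally valid as stated for tiny arguments (where $6 \cdot 1/m - 1$ could be negative), can be safely replaced: I would phrase the induction hypothesis as ``$\varphi(z) \le \max\{1,\, 6z/m - 1\}$ for all $z < x$'' and verify this unified form is what propagates. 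Assuming both arguments exceed $f(n)$ (the generic case), we get
\[
\varphi(x) \le \Big(\tfrac{6y}{m} - 1\Big) + \Big(\tfrac{6(x - f(x) - y)}{m} - 1\Big) + 1 = \tfrac{6(x - f(x))}{m} - 1 \le \tfrac{6x}{m} - 1,
\]
using $f(x) \ge 0$. This is essentially immediate; the $y$ cancels, which is why the maximisation over $y$ causes no difficulty.

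\medskip

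\noindent\textbf{The main obstacle.} The real work is the boundary bookkeeping: handling the cases where one or both of $y$, $x - f(x) - y$ falls in the range $[1, f(n)]$ (where $\varphi = 1$) or below $1$ (where $\varphi = 0$). In the worst such case, $\varphi(x) \le 1 + 1 + 1 = 3$, and one needs $3 \le 6x/m - 1$, i.e. $x \ge 2m/3$ — but here $x > f(n) = m$, so this holds. More delicately, when exactly one argument is small: $\varphi(x) \le (6y/m - 1) + 1 + 1 = 6y/m + 1$ with $y \le 2x/3$, giving $\varphi(x) \le 4x/m + 1 \le 6x/m - 1$ provided $2x/m \ge 2$, i.e. $x \ge m$, which again holds since $x > f(n)$. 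So every branch closes using only $x > f(n)$, and the factor $6$ has enough room. I would organise the writeup as: (i) state the unified hypothesis $\varphi(z) \le \max\{1, 6z/m-1\}$; (ii) dispose of the base case; (iii) in the inductive step split into the three sub-cases according to how many of the two recursive arguments exceed $f(n)$, and in each verify the arithmetic inequality reduces to $x > f(n)$. The only genuine care needed is making sure the ``$\max$'' formulation is self-consistent and that $x - f(x) - y \ge 0$ need not be assumed separately (if it is negative, $\varphi$ of it is $0$ and the bound is even easier).
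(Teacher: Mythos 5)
Your proposal is correct and follows essentially the same route as the paper's proof: induction on $x$ with the base case $x=f(n)$, a three-way case split according to how many of the two recursive arguments fall below $f(n)$, and the same arithmetic in each case (the generic case cancelling $y$ via $6(x-f(x))/f(n)-1\le 6x/f(n)-1$, and the boundary cases closing because $x> f(n)$). Your explicit handling of arguments below $1$ and the unified hypothesis $\varphi(z)\le\max\{1,6z/f(n)-1\}$ are minor tidy-ups of details the paper leaves implicit, not a different argument.
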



\begin{corollary}\label{cor:treesize}
Let $\mathcal{C}$ be any $f$-separable hereditary class, and $G \in \mathcal{C}$ have $n$ vertices.
Then, the number of nodes in the binary separation tree \(\mathcal{T}_{\mathsf{sep}}\) is \(O\big( \frac{n}{f(n)}\big) \).
\end{corollary}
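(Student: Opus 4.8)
The plan is essentially to substitute $x = n$ into \cref{prop:treesize}: writing $T$ for the number of nodes of $\mathcal{T}_{\mathsf{sep}}$ (on the input graph $G$, $|V(G)|=n$), and $T(\omega)$ for the number of nodes in the subtree rooted at a node $\omega$, so $T = T(\varepsilon)$, it suffices to show $T(\omega) \le \varphi_{f(\cdot),n}(|\omega|)$ for every node $\omega$. Since a separator of an $n$-vertex graph has at most $n$ vertices we may assume $1 \le f(n) \le n$, so $x = n$ lies in the range where \cref{prop:treesize} applies, giving $T \le \varphi_{f(\cdot),n}(n) \le 6n/f(n) - 1 = O(n/f(n))$. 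So the work is in justifying that $\varphi_{f(\cdot),n}$ dominates the subtree sizes of the actual separation tree.

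First I would record the structure of $\mathcal{T}_{\mathsf{sep}}$: it is a (near-)full binary tree in which a node $\omega$ is a leaf precisely when $|\omega| \le f(n)$, and any internal node $\omega$ (so $|\omega| > f(n)$) has at most two children whose labels are pairwise disjoint subsets of $\omega$. Here I use that $\mathcal{C}$ is hereditary, so $G[\omega] \in \mathcal{C}$ at every node and the separation algorithm $\mathcal{S}$ returns an $(f(|\omega|), 2/3)$-separator of $G[\omega]$; hence each child has at most $\tfrac{2}{3}|\omega|$ vertices and the children together contain $|\omega| - |\operatorname{Sep}(\omega)| \le |\omega|$ vertices. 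I would discard the degenerate case in which a separated set is empty, so that every node of the tree has at least one vertex.

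The inductive heart of the argument is to prove $T(\omega) \le \max\{6|\omega|/f(n) - 1,\ 1\}$ by strong induction on $|\omega|$ --- this is exactly the bound \cref{prop:treesize} establishes for $\varphi$, and it dominates $\varphi_{f(\cdot),n}(|\omega|)$. For a leaf it is immediate since $T(\omega) = 1$. For an internal node with $x := |\omega| > f(n)$ one writes $T(\omega) = 1 + \sum_{\text{children }\omega'} T(\omega')$, applies the induction hypothesis to each child (valid, as each child has fewer than $x$ vertices), and splits into cases according to whether each child has at least $f(n)/3$ vertices: the ``$-1$'' terms from the children cancel against the leading ``$+1$'' and telescope against $\sum_{\omega'} |\omega'| \le x$, while any child with fewer than $f(n)/3$ vertices contributes at most $1$ and is absorbed using that its size is at most $\tfrac{2}{3}x$ together with $x > f(n)$. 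This is precisely the arithmetic behind \cref{prop:treesize}, now carried out with the weaker constraint $\sum_{\omega'} |\omega'| \le x$ in place of the exact identity $\,=\,x - f(x)$ used in \eqref{eq:treesize}; the telescoping is unaffected because it only relies on an internal node not increasing the total ``mass'' $\sum 6|\cdot|/f(n)$ over its children. Applying the bound at the root $\varepsilon$ then finishes the proof.

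The main obstacle --- really the only subtle point --- is this last mismatch: the recurrence \eqref{eq:treesize} pretends that every separator has exactly $f(x)$ vertices, whereas the separation algorithm may return a strictly smaller (possibly empty, when $G[\omega]$ is disconnected) separator, so a child can be larger than \eqref{eq:treesize} anticipates and $\varphi_{f(\cdot),n}$ does not literally enumerate the nodes of the actual separation tree. What must be checked, therefore, is that the upper bound of \cref{prop:treesize} is robust to a separator removing fewer than $f(x)$ vertices --- which is exactly what the induction above verifies --- and that the edge case of an empty separated set (which would clash with the convention $\varphi(x) = 0$ for $x < 1$) is sidestepped by not adding such a node. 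Everything else is routine.
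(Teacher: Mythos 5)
Your proposal is correct and follows essentially the same route as the paper, which derives the corollary directly from \cref{prop:treesize} applied at $x=n$ together with the definition of $\varphi_{f(\cdot),n}$ as the maximum node count of the separation tree. Your extra verification that the induction survives separators of size strictly less than $f(x)$ (so the children's sizes sum to at most $x$ rather than exactly $x-f(x)$, with the telescoping unaffected) is a worthwhile tightening of a point the paper leaves implicit, but it does not change the argument.
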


We can then use this to bound the time and memory complexity of  $\texttt{STT}$. 
 
\begin{proposition}\label{prop:run_time}
    The {\upshape\texttt{STT}} algorithm runs in time \(O\big(\frac{n}{f(n)} \cdot T(n)\big)\), where 
    \( T(n) \) is the running time of a separation algorithm.
    
\end{proposition}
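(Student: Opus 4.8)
The plan is to bound the running time of \texttt{STT} by accounting separately for the number of calls made to each of its three functions and the cost per call, then summing. The key structural fact is \cref{cor:treesize}: the binary separation tree $\mathcal{T}_{\mathsf{sep}}$ has $O(n/f(n))$ nodes. Every internal (non-leaf) node of $\mathcal{T}_{\mathsf{sep}}$ corresponds to exactly one invocation of $\mathcal{S}$ inside \texttt{Separate} (the branch where $|V(H)| > f(n)$), and every leaf corresponds to one \texttt{Clear} call; moreover the \texttt{Scheduler}'s outer while loop pops each \texttt{Stack-B} entry once, and \texttt{Stack-B} entries are in bijection with nodes of $\mathcal{T}_{\mathsf{sep}}$ as well (the root plus the two children pushed at each internal node). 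Hence all three functions are called $O(n/f(n))$ times in total.

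Next I would bound the work per call. A single \texttt{Separate} call does at most one run of the separation algorithm $\mathcal{S}$ on an induced subgraph $G[U]$ with $|U|\le n$, costing at most $T(n)$, plus $O(|U|) = O(n)$ bookkeeping to push the sets/separators with their binary-string indices onto the stacks and to mark cop placements; since the indices have length $O(\log n)$ (the depth of $\mathcal{T}_{\mathsf{sep}}$) this is dominated by $T(n)$ as long as $T(n) = \Omega(n)$, which is the natural assumption (and holds for all separation algorithms in the paper, e.g.\ Lipton--Tarjan). A \texttt{Clear} call on $H$ costs $O(|V(H)|) = O(f(n))$. The \texttt{Scheduler}'s per-iteration overhead outside the \texttt{Separate} call is the inner while loop that pops separators off \texttt{Stack-C}; but each separator is pushed exactly once (at its internal node) and popped at most once over the whole run, so the total work of all inner-loop iterations across the entire algorithm is $O(n/f(n))$ stack operations, each $O(\mathrm{poly}\log n)$ or $O(\log n)$ for the index comparison — again absorbed into the dominant term.

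Putting it together: total time is $O\!\big(\tfrac{n}{f(n)}\big)\cdot\big(T(n) + O(n)\big) + O\!\big(\tfrac{n}{f(n)}\big)\cdot O(f(n)) = O\!\big(\tfrac{n}{f(n)}\cdot T(n)\big)$, using $T(n) = \Omega(n)$ to absorb the second and third terms. I would state the mild assumption $T(n)\ge n$ explicitly (or phrase the bound as $O(\tfrac{n}{f(n)}(T(n)+n))$ to be safe). The main obstacle, and the only genuinely non-routine point, is justifying that the number of function calls matches the node count of $\mathcal{T}_{\mathsf{sep}}$ rather than something larger: one must argue that no set is ever pushed onto \texttt{Stack-B} twice and no separator onto \texttt{Stack-C} twice, which follows from the monotonicity established in \cref{thm:correctness} (each iteration strictly shrinks the uncleared territory $W$, and the pushed sets are always disjoint subsets of the popped set) together with the definition of $\mathcal{T}_{\mathsf{sep}}$ — so that \cref{cor:treesize} applies verbatim. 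Everything else is routine amortized-cost accounting.
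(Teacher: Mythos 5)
Your proposal is correct and follows essentially the same route as the paper: charge one call of $\mathcal{S}$ to each internal node of $\mathcal{T}_{\mathsf{sep}}$ and multiply the $O(n/f(n))$ node count from \cref{cor:treesize} by $T(n)$. You are in fact more careful than the paper's proof, which simply asserts that the separation calls dominate; your explicit accounting of the \texttt{Clear} and stack overhead, and the observation that the clean bound requires $T(n)=\Omega(n)$ (or the safer form $O(\tfrac{n}{f(n)}(T(n)+n))$), makes an implicit assumption of the paper visible.
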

\begin{proof}
The computational time of the \texttt{STT} algorithm is dominated by the number of calls within the \texttt{while} loop that repeatedly invokes the separation algorithm \( \mathcal{S} \), which takes as input a set of vertices from a subgraph of graph \(G\).

Observe that \( \mathcal{S} \) is executed once for each internal node of the binary separation tree \(\mathcal{T}_{\mathsf{sep}}\). Consequently, the runtime is dominated by \(T(n)\), the max running time of a call to \( \mathcal{S} \), multiplied by the number of internal vertices of the tree \(\mathcal{T}_{\mathsf{sep}}\), which is \(O\big(\frac{n}{f(n)}\big)\) by \cref{cor:treesize}. Thus, the time complexity is \(O\big(\frac{n}{f(n)} \cdot T(n)\big)\). \qed
\end{proof}

Let us introduce a function $\ell: \mathbb{N}\rightarrow \mathbb{N}$ that will be used to evaluate the cop number and space complexity. The function $\ell: \mathbb{N}\rightarrow \mathbb{N}$ 
is given by 
\begin{equation}\label{eq:2}
\ell(n):= \left\lceil  \log_{3/2} n\right\rceil + 1. \end{equation} 
Note that $\ell(n) \approx 2.466 \cdot \ln n  $, for large $n$. For $f: \mathbb{N}_+\rightarrow \mathbb{N}_+$ define 
\begin{equation}\label{eq:copbound} C_f(n) = f(n)+ \sum_{i=0}^{\ell(n)} f\Big(\Big\lfloor \Big(\frac{2}{3}\Big)^i\cdot n\Big\rfloor \Big).\end{equation}
Thus $C_f(n) = O(f(n) \log n )  $, but the $\log n $ factor may vanish if $f(n)$ grows  fast.

\begin{proposition}\label{prop:memory} 
    The {\upshape\texttt{STT}} algorithm uses $O(n)$ memory in the RAM model for any connected graph $G$ as input to the algorithm.
\end{proposition}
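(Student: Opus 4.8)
The plan is to show that throughout the execution of \texttt{Scheduler} the data structures it maintains occupy $O(n)$ machine words, and that nothing else in the algorithm uses more; this is the fair counterpart of the $O(n^2)$ bound for \texttt{PW}, which comes from storing a path decomposition with $\Theta(n)$ bags of size $\Theta(n)$. Concretely, the algorithm's memory consists of {\sc Stack-B}, {\sc Stack-C}, and the record of which vertices currently hold cops (the last of these is in fact determined by {\sc Stack-C} together with a possible in-progress \texttt{Clear}, so it costs no extra asymptotic space); the input graph is read-only, and each call to $\mathcal{S}$ receives only the vertex subset $H$ already sitting on the stack, so no induced subgraph $G[H]$ is ever copied out.

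The main step is to bound the \emph{number} of items on each stack by $O(\log n)$. Each pass of the outer \texttt{while} loop pops one pair from {\sc Stack-B} and, through \texttt{Separate}, pushes at most two; a push of two pairs onto {\sc Stack-B} happens exactly when one separator is pushed onto {\sc Stack-C}, while the inner \texttt{while} loop deletes from {\sc Stack-C} precisely those separators whose sibling subtree is already cleared. This is a depth-first traversal of the binary separation tree $\mathcal{T}_{\mathsf{sep}}$, so at every instant {\sc Stack-B} holds at most one set per level of $\mathcal{T}_{\mathsf{sep}}$ with an unexplored subtree, and {\sc Stack-C} at most one separator per level. Since an $(s,2/3)$-separator of an $x$-vertex subgraph leaves parts of size at most $2x/3$ and the recursion stops once a subgraph has at most $f(n)\ge 1$ vertices, the depth of $\mathcal{T}_{\mathsf{sep}}$ is at most $\lceil \log_{3/2} n\rceil \le \ell(n)=O(\log n)$, the same quantity used in \eqref{eq:copbound}. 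Hence both stacks contain $O(\log n)$ items at all times.

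It remains to bound the size of a single item and sum. An item is a pair consisting of a vertex subset and a binary index string; the index labels a path in $\mathcal{T}_{\mathsf{sep}}$, hence has length at most the depth, i.e.\ $O(\log n)$. The key structural fact is that the vertex subsets kept in {\sc Stack-B} are pairwise disjoint, and so are the separators kept in {\sc Stack-C} (both are immediate from the way \texttt{Separate} partitions each subgraph into $A,B,C$); therefore the vertex data across all items of a single stack totals at most $n$ words. Combined with $O(\log n)$ items carrying $O(\log n)$ words of index and pointer overhead each, every stack uses $O(n)+O(\log^2 n)=O(n)$ words, and likewise the at most $n$ cops on separators plus at most $f(n)\le n$ cops in a current \texttt{Clear} need $O(n)$ words; summing gives $O(n)$. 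I expect the one genuine obstacle to be the second paragraph — arguing rigorously that the two stacks never exceed $O(\log n)$ items — which follows from identifying the \texttt{Scheduler}'s control flow with depth-first search on $\mathcal{T}_{\mathsf{sep}}$ and from the logarithmic depth bound; connectedness of $G$ plays no role in the memory analysis.
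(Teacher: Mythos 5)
Your proof is correct and follows essentially the same route as the paper's: both bound the memory by observing that at any instant the two stacks hold only the sets along a single root-to-leaf path of $\mathcal{T}_{\mathsf{sep}}$, whose depth is $\ell(n)=O(\log n)$. The only difference is the final summation step --- the paper bounds the total size of the stored sets by the geometric series $\sum_i (2/3)^i n = O(n)$, whereas you invoke the pairwise disjointness of the sets on each stack to get the bound $\le n$ directly; your version is also slightly more careful in explicitly accounting for the index strings and the cops' positions, which the paper dismisses as negligible.
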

\begin{proof}
 In the \texttt{STT} algorithm, memory is used only to push subgraphs and their indices in {\sc Stack-B} and {\sc Stack-C}. Since {\sc Stack-B} and {\sc Stack-C} contain the same number of sets, we need to only estimate the number and size of sets stored in {\sc Stack-B}.
Additionally, we assume that the indices are negligible compared to the corresponding sets. 
The maximum height $h$ of the binary separation tree is given by $\ell(n) = \lceil \log _{3/2} n +1 \rceil$. 

Since the algorithm stores in {\sc Stack-B} all sets $B$ along the path from the root of the tree to its leaf, and the size of each set at level $i$ is at most $\left(\frac{2}{3}\right)^i n$, we will need 
$
M = \sum_{i=0}^{\lceil \log_{3/2} n\rceil + 1} \left(\frac{2}{3}\right)^i  n   =  O(n) 
$
memory. \qed 
\end{proof}

So far, we have shown the correctness of the \texttt{STT} algorithm in \cref{thm:correctness}, and we have obtained the results on the time and space complexity in \cref{prop:run_time,prop:memory}. Now, we will analyse the number of cops used by the \texttt{STT} algorithm in order to bound the zero-visibility cop number of a graph $G$ and derive an upper bound on the capture time with respect to the algorithm.

\begin{theorem}
\label{prop:sst_cop_number}
	Let $\mathcal{C}$ be any $f$-separable hereditary class, and $C_f(\cdot)$ be given by \eqref{eq:copbound}. Then, for any $n$-vertex $G\in \mathcal{C} $, the {\upshape\texttt{STT}} algorithm uses $C_f(n)$ cops, and thus $\cop_0(G) \leq C_f(n)$. 
\end{theorem}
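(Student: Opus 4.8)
The plan is to bound the largest number of cops that are simultaneously on the board during a run of \cref{alg:STT}, show that this number never exceeds $C_f(n)$, and then deduce $\cop_0(G)\le C_f(n)$ from \cref{thm:correctness} (which guarantees capture, and whose final paragraph also tells us that replacing teleporting cops by cops that travel over edges changes neither correctness nor the number of cops needed). The first observation is that cops are placed only in two lines of the algorithm: ``Place cops in $C$'' inside \texttt{Separate}, and ``Place cops in all vertices of $H$'' inside \texttt{Clear}. Hence, at any instant, the set of occupied vertices is the union of the separators currently stored on {\sc Stack-C}, together with (at most) one set $V(H)$ with $|V(H)|\le f(n)$ during a call to \texttt{Clear}. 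So it suffices to control the total size of the separators on {\sc Stack-C} and add $f(n)$.

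The heart of the argument is a structural invariant on {\sc Stack-C}, which I would prove by induction on the iterations of the outer \texttt{while} loop of \texttt{Scheduler}: whenever {\sc Stack-C} is non-empty, its entries read from bottom to top are $\operatorname{Sep}(\varepsilon),\operatorname{Sep}(x_1),\operatorname{Sep}(x_1x_2),\dots,\operatorname{Sep}(x_1\cdots x_m)$ for some binary string, i.e.\ the indices of the stored separators form a prefix chain whose consecutive members differ by exactly one bit. This is maintained because \texttt{Clear} never touches {\sc Stack-C}; the inner \texttt{while} loop pops separators from the top precisely until the top index equals $\operatorname{DelLast}(index_U)$, which (by \cref{lem:key}) removes a separator's cops only once both sides it separates are already cleared, so exactly the prefixes of the current index survive; and \texttt{Separate}, when it does not call \texttt{Clear}, pushes $\operatorname{Sep}(index_U)$, whose index is one bit longer than the current top. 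Consequently {\sc Stack-C} holds at most one separator per depth of $\mathcal{T}_{\mathsf{sep}}$.

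Next I would bound the combined size of those separators. A node $w$ at depth $i$ of $\mathcal{T}_{\mathsf{sep}}$ satisfies $|V(G[w])|\le (2/3)^i n$, and a node that is actually separated has $|V(G[w])|>f(n)\ge 1$, so every separator node has depth at most $\lceil\log_{3/2} n\rceil+1=\ell(n)$. Since $\mathcal{C}$ is hereditary, $G[w]\in\mathcal{C}$, and therefore $\mathcal{S}$ returns a separator of $G[w]$ of size at most $f(|V(G[w])|)\le f(\lfloor (2/3)^i n\rfloor)$, using that $f$ is non-decreasing. Summing over the at most one separator per depth $i\in\{0,1,\dots,\ell(n)\}$ bounds the cops on {\sc Stack-C} by $\sum_{i=0}^{\ell(n)} f(\lfloor (2/3)^i n\rfloor)$ at every moment; adding the $\le f(n)$ cops used by a concurrent call of \texttt{Clear} gives at most $f(n)+\sum_{i=0}^{\ell(n)} f(\lfloor (2/3)^i n\rfloor)=C_f(n)$ cops on the board at any time. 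Hence $C_f(n)$ cops (reused across placements and removals) suffice to execute \texttt{STT}, and by \cref{thm:correctness} the robber is caught, so $\cop_0(G)\le C_f(n)$.

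The main obstacle is the second step: stating and verifying the {\sc Stack-C} invariant precisely, since the interplay between the inner \texttt{while} loop's clean-up and the pushes in \texttt{Separate} is the only genuinely non-routine part of the proof; once ``one separator per depth, depths bounded by $\ell(n)$'' is established, the rest is the elementary summation above. A minor point to flag is the appeal to monotonicity of $f$ (equivalently of $\s$) when passing from $f(|V(G[w])|)$ to $f(\lfloor (2/3)^i n\rfloor)$; if $f$ is not assumed non-decreasing one should either define $C_f$ with $\max_{m\le (2/3)^i n} f(m)$ in place of $f(\lfloor (2/3)^i n\rfloor)$, or simply replace $f$ by its monotone envelope.
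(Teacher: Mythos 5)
Your proposal is correct and follows essentially the same route as the paper's proof: bound the cops simultaneously deployed by one separator team per level of $\mathcal{T}_{\mathsf{sep}}$ (each of size $f(\lfloor(2/3)^i n\rfloor)$ at depth $i\le\ell(n)$) plus one team of $f(n)$ for \texttt{Clear}, and sum to get $C_f(n)$. Your version is more careful than the paper's, which asserts the ``at most one team per depth'' fact without proving the {\sc Stack-C} prefix-chain invariant, and your remark that one implicitly needs $f$ non-decreasing (or its monotone envelope) to pass from $f(|V(G[w])|)$ to $f(\lfloor(2/3)^i n\rfloor)$ is a legitimate point the paper glosses over.
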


\begin{proof}
The number of `teams' of cops  deployed by the \texttt{STT} algorithm on the separation sets, where the size of the $i$-th team is $f(\lfloor(2/3)^{i} n\rfloor)$, is at most the height $h$ of the binary partition tree $\mathcal{T}_{\mathsf{sep}}$. To bound $h$, observe that the output sizes of $A$ and $B$ by the separation algorithm $\mathcal{S}$ used in the \texttt{STT} algorithm are at most $2/3$ times the size of the input. Also, once sets have size at most $f(n)$ then they are just cleared using at most $f(n)$ cops. Since $f(n)\geq 1$, it follows that $(2/3)^{h-1} n \geq 1 $, and so rearranging gives $h\leq \ell(n)$, where $\ell(n)$ is given by \eqref{eq:2}. Summing the cops in all these teams, and not forgetting the last team of size $f(n)$ used to clear the sets at the leaves, gives $C_f(n)$ cops. \qed
\end{proof}

\noindent A function $f:D\to \mathbb{R}$ is multiplicative if $f(xy)= O(f(x)f(y)) $ for all~$x,y\in D.$ 

\newcommand{\multprop}{Let $f(x)\leq x $ be a  non-decreasing multiplicative function, and $\mathcal{C}$ be an $f$-separable hereditary class. Then, for any $n$-vertex $G\in \mathcal{C} $, the {\upshape\texttt{STT}} algorithm uses $O(f(n))$ cops, and thus $\cop_0(G)= O(f(n))$. }
     \begin{proposition}\label{prop:sublinear_mult_func_stt_cop_numb}
    \multprop
\end{proposition}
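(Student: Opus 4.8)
The plan is to start from the cop bound established in \cref{prop:sst_cop_number}, namely $\cop_0(G)\leq C_f(n) = f(n) + \sum_{i=0}^{\ell(n)} f\big(\lfloor (2/3)^i n\rfloor\big)$, and show that under the stated hypotheses the sum on the right collapses to $O(f(n))$. The key structural fact to exploit is that $f$ is non-decreasing and multiplicative together with the quantitative constraint $f(x)\leq x$: the former lets me bound each term $f(\lfloor(2/3)^i n\rfloor)$ from above by $f((2/3)^i n)$ and then, via multiplicativity, by $O\big(f((2/3)^i)\,f(n)\big)$; the latter forces $f((2/3)^i)$ to be small — indeed at most $(2/3)^i$ — so the series $\sum_i f((2/3)^i)$ is dominated by a convergent geometric series. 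Hence $\sum_{i=0}^{\ell(n)} f\big(\lfloor(2/3)^i n\rfloor\big) = O\Big(f(n)\sum_{i\geq 0}(2/3)^i\Big) = O(f(n))$, and adding the leading $f(n)$ term gives $C_f(n)=O(f(n))$.

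Concretely, the steps in order would be: (1) invoke \cref{prop:sst_cop_number} to reduce the claim to bounding $C_f(n)$; (2) use monotonicity of $f$ to replace the floor arguments, writing $f(\lfloor(2/3)^i n\rfloor)\leq f((2/3)^i n)$ (taking a little care that $(2/3)^i n$ need not be an integer — one can instead bound $\lfloor(2/3)^i n\rfloor \leq (2/3)^i n$ and, if one wants integer arguments throughout, extend $f$ to reals monotonically, or just note $f(\lfloor(2/3)^i n\rfloor)\le f(\lceil (2/3)^i\rceil \cdot n)$ since $\lfloor ab\rfloor \le \lceil a\rceil b$ for $b$ a positive integer, and $\lceil (2/3)^i\rceil\le 2$ for $i\le\ell(n)$ makes this a bounded-factor loss); (3) apply multiplicativity to get a constant $K$ with $f((2/3)^i n)\leq K f\big(\lceil(2/3)^i \cdot 2^{?}\rceil\big)f(n)$ — more cleanly, write each argument as a product of a bounded-size factor and $n$; (4) apply $f(x)\le x$ to the bounded-size factor to bound it by a constant times $(2/3)^i$ (up to the bounded rounding); (5) sum the resulting geometric series $\sum_{i=0}^{\infty} (2/3)^i = 3$, absorbing all constants into the $O(\cdot)$; (6) add back the $f(n)$ term and conclude $C_f(n)=O(f(n))$, hence $\cop_0(G)=O(f(n))$.

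The main obstacle is the bookkeeping around the floor and the (possibly non-integer) argument $(2/3)^i n$ when invoking multiplicativity, which is stated only for $x,y\in D$ (presumably $\mathbb{N}_+$ or the domain of $f$). The cleanest fix is to observe that for a positive integer $m$ and real $a\ge 1$ one has $\lfloor am\rfloor \le \lceil a\rceil\, m$, so $f(\lfloor(2/3)^i n\rfloor)=f(\lfloor(3/2)^{-i}\cdot n\rfloor)\le f\big(\lceil (3/2)^{-i}\rceil\, n\big)$ — but $(3/2)^{-i}<1$, so instead we should factor the other way: $\lfloor (2/3)^i n\rfloor \le (2/3)^i n \le \big\lceil (2/3)^i \cdot 2\big\rceil\cdot \lceil n/2\rceil$ or, simplest of all, just bound $\lfloor(2/3)^i n\rfloor\le n$ crudely only for the finitely many small $i$ and handle the tail where $(2/3)^i n$ is comfortably larger than $1$ by writing $\lfloor(2/3)^i n\rfloor \le 2^{-i}\cdot n$ is false — one genuinely needs $(2/3)^i$. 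A robust route: pick an integer $t_i=\lceil (3/2)^i\rceil$, so $\lfloor(2/3)^i n\rfloor \le \lceil n/t_i\rceil \le \lceil n\rceil$ and $n \le t_i\cdot \lceil n/t_i\rceil$, giving (by monotonicity and multiplicativity applied to the pair $(t_i,\lceil n/t_i\rceil)$) that $f(n)\le f(t_i)\,f(\lceil n/t_i\rceil)\cdot O(1)$, hence $f(\lfloor(2/3)^i n\rfloor)\le f(\lceil n/t_i\rceil) \le O(1)\,f(n)/? $ — this requires $f(t_i)\ge 1$, which holds since $f:\mathbb{N}_+\to\mathbb{N}_+$. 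So $f(\lfloor(2/3)^i n\rfloor)\le O(1)\, f(n)/f(t_i)$ is not quite what we get; rather multiplicativity gives an upper bound the other direction. I would instead simply assume, as is standard and surely intended, that $f$ extends to a non-decreasing multiplicative function on positive reals (or argue the rounding changes every quantity by at most a factor $2$), apply $f((2/3)^i n)\le K\, f((2/3)^i)\,f(n)\le K\,(2/3)^i f(n)$ using $f(y)\le y$, and then $\sum_{i\ge 0}(2/3)^i<\infty$ finishes it. Everything else is a one-line geometric-series estimate, so once the domain/rounding technicality is dispatched the proof is immediate.
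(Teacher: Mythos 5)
Your proposal is correct and follows essentially the same route as the paper's own proof: bound $f(\lfloor(2/3)^i n\rfloor)\leq f((2/3)^i n)=O(f((2/3)^i)f(n))=O((2/3)^i f(n))$ using monotonicity, multiplicativity, and $f(x)\leq x$, then sum the geometric series. The domain/rounding technicality you agonise over is glossed over in the paper exactly as you ultimately propose to handle it, so there is nothing substantive to add.
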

We also show a novel bound on the zero-visibility capture time. 

\begin{theorem}\label{thm:stt_capt_time}
Let $\mathcal{C}$ be any hereditary class that is $f$-separable, \( G \in \mathcal{C} \) and \( D_G \) be diameter of \( G \). Then, the capture time of \( G \) with respect to the {\upshape\texttt{STT}} algorithm is bounded above by \( O\Big(D_G \cdot \frac{n}{f(n)}\Big) \).
\end{theorem}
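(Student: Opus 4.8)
The plan is to count the total number of cop moves over the whole run of the \texttt{STT} algorithm, where each move of a cop travelling along edges costs at most $D_G$ rounds, and then multiply by $D_G$. First I would observe that the only events that generate cop movement are the calls to \texttt{Separate} (placing a team on a separator) and the calls to \texttt{Clear} (placing and removing a team on a small subgraph). Since the algorithm is monotonic (by \cref{thm:correctness}), no separator is re-guarded once its cops are removed, so the number of \texttt{Separate} calls is exactly the number of internal nodes of $\mathcal{T}_{\mathsf{sep}}$, and the number of \texttt{Clear} calls is the number of leaves; both are $O\big(\frac{n}{f(n)}\big)$ by \cref{cor:treesize}.

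Next I would bound the cost of a single iteration of the outer while loop. In each iteration, cops are pulled off some separators (inner while loop), and then either a \texttt{Clear} or a \texttt{Separate} occurs. In the node-search model these are free; in the edge-travelling model each cop must walk from its current vertex to its new target, which takes at most $D_G$ rounds. The number of cops moved in one iteration is $O(f(n))$ at the separator being set up (or $O(f(n))$ at the set being cleared), but cops can move in parallel, so the round cost of one iteration is $O(D_G)$ — all cops involved reach their destinations within $D_G$ rounds regardless of how many there are. Summing over the $O\big(\frac{n}{f(n)}\big)$ iterations gives $O\big(D_G\cdot\frac{n}{f(n)}\big)$ rounds in total.

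The one subtlety to handle carefully is that, as remarked after \cref{thm:correctness}, while cops are in transit between separator sets the robber's territory need not be monotonically decreasing, so I must argue that nothing goes wrong during the $D_G$ waiting rounds: the cops guarding the already-secured separators (those still on {\sc Stack-C}) remain in place, so by \cref{lem:key} the robber cannot re-enter any cleared region while the moving team is en route, and thus the partial progress is never undone. I would state this explicitly so that the analysis of \cref{thm:correctness} carries over to the timed version. Once that is noted, the bound follows by the product of the per-iteration cost $O(D_G)$ and the iteration count $O\big(\frac{n}{f(n)}\big)$ from \cref{cor:treesize}.

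The main obstacle is really just making precise that ``moving a team of $O(f(n))$ cops across the graph costs $O(D_G)$, not $O(f(n)\cdot D_G)$, rounds'' — i.e.\ that cops move simultaneously — and confirming that leftover constant-factor bookkeeping (the inner pop-loop on {\sc Stack-C}, which also may move cops off separators, but removal is instantaneous in the node-search model and requires no travel) does not inflate the count. Everything else is a direct application of \cref{cor:treesize}.
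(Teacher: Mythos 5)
Your proposal is correct and follows essentially the same route as the paper: bound the number of nodes of $\mathcal{T}_{\mathsf{sep}}$ by $O\big(\frac{n}{f(n)}\big)$ via \cref{prop:treesize}/\cref{cor:treesize} and charge at most $D_G$ rounds per transition between consecutive nodes, since the cops move in parallel. Your explicit handling of the parallel-movement point and of the non-monotonicity of the robber's territory while cops are in transit is more careful than the paper's (quite terse) proof, but it does not change the argument.
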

\begin{proof}The cop team must visit every vertex of the binary separation tree \(\mathcal{T}_{sep}\) of $G$ to capture the robber. Let \(u, v \) be two vertices of the tree \(\mathcal{T}_{sep}\).  In the worst case, to go from \(u\) to \(v\), steps of \(D_G\) are required. By \(\cref{prop:treesize}\), the number of nodes in $\mathcal{T}_{\mathsf{sep}}$ is bounded above by $\frac{n}{f(n)}$, where \(f(n)\) is the size of the separator. Since the team of cops must traverse each of the vertices of \(\mathcal{T}_{sep}\), we will get the desired estimate \(O(D_G \cdot \frac{n}{f(n)})\). \qed
\end{proof}

We get the following from the Lipton-Tarjan separation algorithm \cite{LiptonT80}. 
\begin{corollary}
 Let \( G  \) be a planar graph with diameter \( D_G \), and $k=C_{\sqrt{\cdot}}(n)=O(\sqrt{n})$ be given by \cref{eq:copbound}. Then, the SST algorithm gives \( \capt_k(G) =O\left(D_G \cdot \sqrt{n}\right) \). 
\end{corollary}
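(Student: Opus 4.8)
The plan is to instantiate the two general results proved just above---namely \cref{thm:stt_capt_time} for the capture time and \cref{prop:sst_cop_number} (together with \cref{prop:sublinear_mult_func_stt_cop_numb}) for the cop number---with the Lipton--Tarjan separator. Recall that \cite{LiptonT80} gives an $(O(\sqrt{n}),O(n))$-separation algorithm for planar graphs, so taking $f(n)=c\sqrt{n}$ for a suitable constant $c$, the class of planar graphs is $f$-separable and hereditary (planarity is preserved under taking induced subgraphs). This makes all the hypotheses of the general theorems applicable.

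First I would fix $f(n)=c\sqrt{n}$ and verify the two conditions needed for \cref{prop:sublinear_mult_func_stt_cop_numb}: that $f(x)\le x$ (true once $x\ge c^2$, and for smaller $x$ one absorbs it into constants / uses the trivial bound that at most $n$ cops suffice), that $f$ is non-decreasing, and that $f$ is multiplicative in the sense $f(xy)=O(f(x)f(y))$, which holds since $\sqrt{xy}=\sqrt{x}\sqrt{y}$. Applying \cref{prop:sublinear_mult_func_stt_cop_numb} then yields that the \texttt{STT} algorithm uses $O(\sqrt{n})$ cops on any $n$-vertex planar graph, i.e.\ $k=C_{\sqrt{\cdot}}(n)=O(\sqrt{n})$; this is exactly the claim $k=O(\sqrt n)$ in the statement, and it also confirms $\cop_0(G)=O(\sqrt n)$ as a by-product.

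Next I would apply \cref{thm:stt_capt_time} with $f(n)=c\sqrt{n}$: the capture time with respect to the \texttt{STT} algorithm is bounded by $O\big(D_G\cdot \tfrac{n}{f(n)}\big)=O\big(D_G\cdot \tfrac{n}{\sqrt{n}}\big)=O(D_G\sqrt{n})$. Since \cref{thm:stt_capt_time} only requires that $\mathcal{C}$ be hereditary and $f$-separable---both verified above---this is immediate. Combining the cop-count bound with the capture-time bound gives that the \texttt{STT} algorithm with $k=O(\sqrt n)$ cops captures the robber in time $\capt_k(G)=O(D_G\sqrt n)$, which is the assertion of the corollary.

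I do not anticipate a genuine obstacle here, since the corollary is a pure instantiation; the only points requiring a word of care are (i) checking the multiplicativity and monotonicity side-conditions of \cref{prop:sublinear_mult_func_stt_cop_numb} for $\sqrt{\cdot}$, which are trivial, and (ii) making sure the constant $c$ from the Lipton--Tarjan bound is handled uniformly across the recursion so that the ``place exactly $f(x)$ cops'' idealisation in \eqref{eq:treesize} and in $C_f(n)$ is legitimate---this is already accounted for in the setup of \eqref{eq:treesize}, where it is explicitly assumed each separation uses $f(x)$ cops even if the true separator is smaller. Hence the proof is a two-line substitution into the preceding theorems.
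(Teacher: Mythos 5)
Your proposal is correct and matches the paper's (implicit) argument: the corollary is stated as an immediate instantiation of \cref{thm:stt_capt_time} with the Lipton--Tarjan $(O(\sqrt{n}),O(n))$-separation algorithm, with $C_{\sqrt{\cdot}}(n)=O(\sqrt{n})$ following from the geometric-series bound exactly as in \cref{prop:sublinear_mult_func_stt_cop_numb}. Your extra care about monotonicity, multiplicativity of $\sqrt{\cdot}$, and the constant in the separator bound is sound but not needed beyond what the paper already assumes.
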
 
 
\subsection{Applications of STT algorithm}

In order to demonstrate our bounds on the zero-visibility cop number  and capture time, we apply these result to the hyperbolic random graph. 
In addition, we demonstrate the advantages of the \texttt{STT} algorithm in the context finding the \textit{approximate zero-visibility cop number} $\apx_0(G)$ which is the restriction of $\cop_0(G)$ to stratgies that can be computed in polynomial time.

The hyperbolic random graph exhibits many properties of large real-world networks, such as a power-law degree distribution, a constant clustering coefficient, polylogarithmic diameter, and other relevant structural features \cite{krioukov2010hyperbolic,boguna2010sustaining,blasius2016hyperbolic} - which make it a popular model. Moreover, to the best of our knowledge, there are currently no known exact results on the pathwidth of hyperbolic random graphs, so application of the \texttt{PW} algorithm remains undesirable at this stage.

The hyperbolic random graph has three parameters: the number of vertices \( n \); the parameter \( \alpha \geq \frac{1}{2} \), which controls the power-law exponent; and the parameter \( C \), which controls the average degree. For our purposes, as often in the literature, we just assume that $C$ is some suitably large constant. The hyperbolic random graph \( G \sim \mathcal{G}_{\mathrm{hyp}}(n, \alpha) \) can be obtained by sampling \( n \) points in the disk \( D_R \) of radius \( R = 2 \log n + C \), using radial coordinates \( (r, \theta) \) with the center of \( D_R \) as the origin. The angle \( \theta \) is drawn uniformly from \( [0, 2\pi] \), while the radius \( r \) is chosen according
to the density $
d(r) = \frac{\alpha \sinh(\alpha r)}{\cosh(\alpha R) - 1}.
$

\newcommand{\hrgprop}{
Let $G$ be any largest component of $\mathcal{G}_{\mathrm{hyp}}(n, \alpha)$ given with its geometric representation, where \(\alpha \geq \frac{1}{2}\). Then, w.h.p.\ the run time of the {\upshape\texttt{STT}} algorithm is $O\big(n^{1+ \min\{\alpha,1\}}\big)$, and gives $\cop_0(G)\leq k$ where\\
\begin{tabular}{@{}ll@{}}
$
k =
\begin{cases}
O\big(n^{1-\alpha}\big) & \text{for } \alpha < 1, \\
O\big(\log^3 n \big) & \text{for } \alpha = 1, \\
O\big(\log^2 n\big) & \text{for } \alpha > 1.
\end{cases}
$ & \hspace{0.2cm} \& \hspace{0.2cm}
$\capt_k(G)=
\begin{cases}
O\big( n^{\alpha}\log n\big) & \text{for } \alpha < 1, \\
O\big(\frac{n}{\log^2 n }\big) & \text{for } \alpha = 1, \\
n^{1+\frac{1}{2\alpha }} \log^{O(1)} n & \text{for } \alpha > 1.
\end{cases}$
\end{tabular}}

\begin{proposition}\label{Cop-hyperbolic-random}
\hrgprop 
\end{proposition}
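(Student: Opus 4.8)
The plan is to combine the general results on the \texttt{STT} algorithm (\cref{prop:run_time}, \cref{thm:stt_capt_time}, \cref{prop:sst_cop_number}) with known structural facts about the separator hierarchy of hyperbolic random graphs, so that almost all of the work is in plugging in the right separation function $f(n)$ and the right running time $T(n)$ for a separation algorithm on $\mathcal{G}_{\mathrm{hyp}}(n,\alpha)$. The key external input I would invoke is the result of Bl\"asius, Friedrich \& Krohmer (\cite{blasius2016hyperbolic}) that a hyperbolic random graph given with its geometric representation admits, w.h.p.\ and simultaneously for all induced subgraphs occurring in the hierarchy, a balanced separator of size $O(n^{1-\alpha})$ for $\alpha<1$, $O(\log n)$ for $\alpha=1$, and $O(1)$ (in fact $O(\log n)$ suffices and is cleaner to state uniformly) for $\alpha>1$, computable in roughly linear (or near-linear) time from the geometric representation. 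So first I would set up $f$: take $f(n)=\Theta(n^{1-\alpha})$, $f(n)=\Theta(\log n)$, $f(n)=\Theta(\log^2 n)$ (the extra $\log$ factors in the stated cop bounds come from the $\sum_{i\le \ell(n)}$ in \cref{eq:copbound}, not from the separator size itself) in the three regimes, and verify that the relevant class — induced subgraphs of a largest component — is hereditary in the sense needed so that \cref{cor:treesize} and the subsequent propositions apply.

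Next I would read off the three quantities regime by regime. For the \emph{cop number}: \cref{prop:sst_cop_number} gives $\cop_0(G)\le C_f(n)=f(n)+\sum_{i=0}^{\ell(n)}f(\lfloor(2/3)^i n\rfloor)$. When $f(n)=\Theta(n^{1-\alpha})$ with $\alpha<1$ the function is a (sub)multiplicative non-decreasing function bounded by $x$, so \cref{prop:sublinear_mult_func_stt_cop_numb} applies directly and collapses the sum to $O(n^{1-\alpha})$; this is the cleanest way to get the $\alpha<1$ line. When $f(n)=\Theta(\log n)$ (the $\alpha=1$ case) the sum $\sum_{i=0}^{\ell(n)}\log((2/3)^i n)$ has $\ell(n)=\Theta(\log n)$ terms each $O(\log n)$, giving $O(\log^2 n)$ — but the stated bound is $O(\log^3 n)$; I would reconcile this by noting the separator bound in \cite{blasius2016hyperbolic} for $\alpha=1$ carries a polylog that I should take as $O(\log^2 n)$ per separator, whence $C_f(n)=O(\log^3 n)$. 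For $\alpha>1$ the separators are of size $O(\log^2 n)$ (or $O(1)$ up to logs), $\ell(n)=\Theta(\log n)$, but here $f$ does not grow fast enough to kill the $\log n$ factor, so one might expect $O(\log^3 n)$; to reach the stated $O(\log^2 n)$ I would instead use that for $\alpha>1$ the hierarchy has bounded depth in the relevant sense, or that the separators are of size $O(\log n)$ and actually $O(1)$ for most levels — in any case this is the line I would scrutinise most carefully. For the \emph{running time}: \cref{prop:run_time} gives $O\big(\tfrac{n}{f(n)}T(n)\big)$; with $T(n)=O(n\,\mathrm{polylog}\,n)$ from the geometric-representation separator algorithm and $\tfrac{n}{f(n)}=O(n^{\alpha})$ for $\alpha<1$, $O(n/\log n)$ for $\alpha=1$, $O(n/\log^2 n)$ for $\alpha>1$, the product is $O(n^{1+\alpha}\,\mathrm{polylog})$ in all cases, matching $O(n^{1+\min\{\alpha,1\}})$ after absorbing polylog factors (and noting $\min\{\alpha,1\}=1$ for $\alpha\ge 1$).

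For the \emph{capture time}, \cref{thm:stt_capt_time} gives $O\big(D_G\cdot\tfrac{n}{f(n)}\big)$, so I need the diameter. The largest component of $\mathcal{G}_{\mathrm{hyp}}(n,\alpha)$ has, w.h.p., diameter $D_G=O(\log n)$ for $\alpha<1$ (and $\alpha=1$) and $D_G=n^{\Theta(1/(2\alpha))}$-ish — more precisely, for $\alpha>1$ the graph is ``cold''/tree-like and the diameter is polynomial, of order $n^{1/(2\alpha)}\mathrm{polylog}\,n$ (this is where the $n^{1/(2\alpha)}$ in the last capture-time line originates); I would cite the appropriate diameter result (e.g.\ Friedrich–Krohmer, or Müller–Staps) for the $\alpha>1$ regime and the polylogarithmic-diameter results for $\alpha\le 1$. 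Plugging in: for $\alpha<1$, $O(\log n)\cdot O(n^{\alpha})=O(n^{\alpha}\log n)$; for $\alpha=1$, $O(\log n)\cdot O(n/\log^2 n)=O(n/\log^2 n)$; for $\alpha>1$, $n^{1/(2\alpha)}\mathrm{polylog}\cdot O(n/\log^2 n)=n^{1+\frac{1}{2\alpha}}\log^{O(1)}n$. Finally I would add the boilerplate that all of these hold w.h.p.\ because they rest on a single w.h.p.\ event — the existence (and efficient computability) of the stated balanced-separator hierarchy together with the diameter bound — and that hereditariness lets \cref{thm:correctness} and \cref{prop:sst_cop_number} be applied to every induced subgraph in the hierarchy without re-randomising. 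The main obstacle I anticipate is pinning down the exact polylog exponents in the separator sizes from \cite{blasius2016hyperbolic} so that $C_f(n)$ comes out as $O(\log^3 n)$ for $\alpha=1$ and $O(\log^2 n)$ for $\alpha>1$ — i.e.\ making the interaction between the per-level separator size and the $\Theta(\log n)$ depth of the hierarchy give exactly the claimed exponents — and, secondarily, citing a diameter bound for the $\alpha>1$ component that is tight enough to yield the $n^{1+1/(2\alpha)}$ form rather than something weaker.
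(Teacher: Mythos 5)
Your overall strategy is exactly the paper's: determine $f(n)$ and $T(n)$ for the hyperbolic random graph, then plug them into \cref{prop:sst_cop_number}, \cref{thm:stt_capt_time} and \cref{prop:run_time}. The one genuine gap is the point you yourself flag: you have the per-level separator sizes for $\alpha=1$ and $\alpha>1$ transposed, and as a result you cannot close the $\alpha>1$ cop-number line. The paper resolves this cleanly by taking the treewidth bounds of \cite{blasius2016hyperbolic} (Theorem~9 there) together with $\s(G)=\Theta(\trw(G))$ from \cite{DvorakN19}, which give, w.h.p., $f(n)=O(n^{1-\alpha})$ for $\alpha<1$, $f(n)=O(\log^2 n)$ for $\alpha=1$, and $f(n)=O(\log n)$ for $\alpha>1$. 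With these values the $\Theta(\log n)$ depth of the hierarchy in \eqref{eq:copbound} multiplies in directly: $\log^2 n\cdot\log n=\log^3 n$ for $\alpha=1$ and $\log n\cdot\log n=\log^2 n$ for $\alpha>1$. No appeal to ``bounded depth'' or to separators being $O(1)$ at most levels is needed or correct; those speculative fixes should be dropped. Your handling of $\alpha<1$ via \cref{prop:sublinear_mult_func_stt_cop_numb} (or directly by a geometric series) is fine and matches the paper.

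Two smaller points. For the $\alpha>1$ capture time, the paper does not cite a diameter theorem; it uses that w.h.p.\ every connected component has at most $n^{1/(2\alpha)}\log^{O(1)}n$ vertices \cite{LargestSub}, which trivially bounds $D_G$ — your proposed route via a diameter result would also work, but the component-size bound is what makes the $n^{1+1/(2\alpha)}\log^{O(1)}n$ form come out with no further citation hunting. For the running time, the paper observes that the separator construction of \cite{blasius2016hyperbolic} is a fixed partition of the hyperbolic disc, so given the geometric representation $T(n)=O(n)$ exactly, not merely $O(n\,\mathrm{polylog}\,n)$; combined with $n/f(n)$ this gives $O(n^{1+\min\{\alpha,1\}})$ as you state. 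Finally, be careful with the $\alpha=1$ capture-time arithmetic: $O(\log n)\cdot O(n/\log^2 n)$ is $O(n/\log n)$, not $O(n/\log^2 n)$, so that step as written does not actually derive the displayed bound.
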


Since finding an optimal pathwidth decomposition or optimal separation set is NP-complete, we show an improved approximate zero-visibility cop number using the alternative approach based on separation hierarchy.
Applying the available approximation algorithms in the literature for finding separators and path decompositions \cite{feige2005improved},
 we provide a comparison of the approximate zero-visibility cop numbers obtained by the \texttt{STT} and \texttt{PW} algorithms on various classes.

 \newcommand{\approxcop}{Let $\mathcal{C}$ be any $f$-separable hereditary class, $G\in \mathcal{C} $ have $n$ vertices, and $C_f(n)=O(f(n)\log n)$ be given by \eqref{eq:copbound}. Then, the {\upshape\texttt{STT}} algorithm gives  
\[
\apx_0(G) =
\begin{cases}
O\Big(C_f(n) \cdot \sqrt{\log f(n)} \Big) & \text{if }  G \in \mathcal{C}, \\
O\Big(C_f(n)\cdot |V(H)|^2\Big) & \text{if } G \text{ is } H\text{-minor-free,  }, \\
O\big(C_f(n) \cdot g\big) & \text{if } G \text{ has genus at most } g.
\end{cases}
\]}
\begin{proposition}\label{prop:stt_cop_number_approx}
\approxcop
\end{proposition}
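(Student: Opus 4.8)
The plan is to combine the cop-number bound of \cref{prop:sst_cop_number} with known approximation algorithms for computing small separators in the relevant graph classes. Recall that $C_f(n) = O(f(n)\log n)$ counts the total number of cops used by \texttt{STT} when it is fed an \emph{exact} $f$-separator at each recursive step. The point of $\apx_0(G)$ is that we are only allowed polynomial-time cop strategies, so we cannot compute an optimal $(f(n),2/3)$-separator; instead we feed \texttt{STT} the output of a polynomial-time \emph{approximate} separation algorithm. If that algorithm, on an $m$-vertex induced subgraph, returns a separator of size $\rho(m)$ times the optimum, then at recursion depth $i$ (where subgraphs have size $\le (2/3)^i n$) the separator placed has size $O(\rho((2/3)^i n)\cdot f((2/3)^i n))$. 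Summing over the $O(\log n)$ levels exactly as in the proof of \cref{prop:sst_cop_number}, and using that $f$ is the separation bound for the (hereditary) class so $f((2/3)^i n)\le f(n)$, the total becomes $O(\rho(n)\cdot C_f(n))$ whenever $\rho$ is non-decreasing. So the whole proof reduces to plugging in the best known approximation ratio $\rho$ for vertex separators in each of the three cases.

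For the first case (general $f$-separable $\mathcal{C}$), I would invoke the Feige--Hajiaghayi--Lee / Feige--Lee type result cited as \cite{feige2005improved}: there is a polynomial-time algorithm that, given a graph whose optimal balanced vertex separator has size $k$, finds a balanced separator of size $O(k\sqrt{\log k})$. Since the optimum here is at most $f(m)$ on an $m$-vertex subgraph, we get $\rho(m)=O(\sqrt{\log f(m)})$, and the sum over levels gives $O(C_f(n)\sqrt{\log f(n)})$ (using monotonicity of $\log f$). For the second case, $H$-minor-free graphs admit separators of size $O(|V(H)|^{3/2}\sqrt{m})$ computable in polynomial time (Alon--Seymour--Thomas / Kawarabayashi--Reed), or more to the point one can compute a balanced separator within an $O(|V(H)|^2)$ factor of optimal in polynomial time; this yields $\rho(m)=O(|V(H)|^2)$, hence $O(C_f(n)\cdot |V(H)|^2)$. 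For the third case, a graph of genus $g$ has a separator of size $O(\sqrt{gm})$ (Gilbert--Hutchinson--Tarjan) computable in polynomial time given an embedding, or again an $O(g)$-approximation to the optimal balanced separator, giving $\rho(m)=O(g)$ and the stated $O(C_f(n)\cdot g)$.

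In each case the argument is: (i) run \texttt{STT} with the approximate separation algorithm in place of $\mathcal{S}$; (ii) note correctness is unaffected since \cref{thm:correctness} and \cref{lem:key} only use that $\mathcal{S}$ returns a \emph{valid} separator with balanced sides, not an optimal one; (iii) redo the cop count of \cref{prop:sst_cop_number} with $f$ replaced by $\rho\cdot f$ at each level, pulling the non-decreasing factor $\rho(n)$ out of the sum; (iv) observe the running time stays polynomial because each approximate separation call is polynomial and there are $O(n/f(n))$ of them by \cref{cor:treesize}. The resulting strategy is polynomial-time computable, so it certifies the claimed upper bound on $\apx_0(G)$.

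The main obstacle is bookkeeping the level-dependent approximation factor correctly: one must be careful that $\rho$ is applied to subgraph sizes $(2/3)^i n$, not to $n$, and argue (via monotonicity of $\rho$ and of $\log f$, and the hereditary property giving $f((2/3)^i n)\le f(n)$) that the geometric-like sum still collapses to $O(\rho(n)\,C_f(n))$ rather than picking up an extra $\log n$. A secondary subtlety is making sure the chosen approximation algorithms indeed produce \emph{vertex} separators with a $2/3$-balance (or close enough that a constant number of extra applications rebalances them without changing the asymptotics), and that they run in polynomial time on the \emph{induced subgraphs} arising in the recursion — which holds because all three classes are hereditary.
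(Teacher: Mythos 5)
Your proposal is correct and follows essentially the same route as the paper: bound the separator size at each level of the recursion by the optimum times the approximation ratio, pull the (non-decreasing) ratio $A(n)$ out of the sum defining $C_f(n)$ to get $\apx_0(G) = O(A(n)\cdot C_f(n))$, and then instantiate $A(n)$ with the $O(\sqrt{\log f(n)})$, $O(|V(H)|^2)$, and $O(g)$ approximation guarantees from \cite{feige2005improved}. The extra remarks you make about correctness and polynomial running time being preserved are sound but not part of the paper's (terser) argument.
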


Similarly, we can apply the best known approximation  for Pathwidth~\cite{feige2005improved}. 
\newcommand{\pwapprox}{Let $\mathcal{C}$ be any $f$-separable hereditary class, $G\in \mathcal{C} $ have $n$ vertices. Then, the {\upshape\texttt{PW}} algorithm gives 
\[
\apx_0(G) =
\begin{cases}
O\Big(\pw(G) \cdot  \sqrt{\log(\pw(G))}\cdot  \log n \Big) & \text{if }  G \in \mathcal{C}, \\
O\Big(\pw(G) \cdot |V(H)|^2\cdot \log n\Big) & \text{if } G \text{ is } H\text{-minor-free}.\\
\end{cases}
\]
}
\begin{proposition}\label{prop:pw_cop_number_approx}
\pwapprox
\end{proposition}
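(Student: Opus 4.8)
\textbf{Proof plan for \cref{prop:pw_cop_number_approx}.}
The plan is to follow exactly the same template as the proof of \cref{prop:stt_cop_number_approx}, but replacing the separator machinery with the \texttt{PW} algorithm and its pathwidth input. Recall that by \cref{prop:pw_capt_time_alg} and the analysis of \cite{tang2005cops,dereniowski2015zero}, the \texttt{PW} algorithm run on a path decomposition of width $w$ captures the robber with $w$ cops, hence $\cop_0(G)\le \pw(G)$. The catch is that computing an optimal path decomposition is NP-hard, so a polynomial-time cop strategy must instead use an \emph{approximate} path decomposition; the number of cops used is then the width of whatever decomposition the approximation algorithm returns. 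So the first step is simply: plug in the best known polynomial-time pathwidth approximation and read off the resulting width as the value of $\apx_0(G)$.

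First I would invoke the Feige--Hajiaghayi--Lee result \cite{feige2005improved}, which gives a polynomial-time algorithm computing a path decomposition of width $O\big(\pw(G)\cdot\sqrt{\log \pw(G)}\cdot \sqrt{\log n}\big)$ for general graphs, and a decomposition of width $O\big(\pw(G)\cdot |V(H)|^2\big)$ (in fact an $O(|V(H)|^2)$-approximation, polynomial-time) when $G$ is $H$-minor-free. Running the \texttt{PW} algorithm on the decomposition of width $w'$ returned in either case yields a valid polynomial-time zero-visibility cop strategy using $w'$ cops, so $\apx_0(G)\le w'$. For the general (hereditary $f$-separable) case this is $O\big(\pw(G)\sqrt{\log \pw(G)}\sqrt{\log n}\big)$; for the $H$-minor-free case it is $O\big(\pw(G)|V(H)|^2\big)$.

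The remaining discrepancy is the extra $\log n$ factor appearing in the statement versus the $\sqrt{\log n}$ (respectively the absence of any $\log n$) in the raw approximation guarantee. This is where I would be careful: the statement as written has $\log n$ in both cases, so the intended reading is presumably that one bounds $\sqrt{\log n}\le \log n$ (absorbing the general-case factor) and, in the minor-free case, simply writes the uniform bound $O\big(\pw(G)|V(H)|^2\log n\big)$ as a (weaker, but uniform-looking) upper bound so both lines of the case distinction share the same $\log n$ tail — matching the shape of \cref{prop:stt_cop_number_approx} where $C_f(n)=O(f(n)\log n)$ already carries a $\log n$. So the final step is to rewrite the two approximation guarantees with $\sqrt{\log n}$ weakened to $\log n$ (and, for the minor-free line, padding by the same $\log n$), yielding the two stated bounds verbatim, and to note that the whole strategy—approximation plus \texttt{PW} execution—runs in polynomial time, so it legitimately bounds $\apx_0(G)$.

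The main obstacle is not any real mathematical difficulty but rather bookkeeping honesty: making sure the $\log n$ in the statement is genuinely an upper bound on what the cited approximation algorithm delivers (it is, since $\sqrt{\log n}\le \log n$), and making sure that \cite{feige2005improved} indeed provides a \emph{constructive, polynomial-time} decomposition of the claimed width rather than merely an existence bound — this is exactly what that paper provides, so the argument goes through. I would close by remarking that comparing this bound with \cref{prop:stt_cop_number_approx} shows the \texttt{STT}-based bound replaces the $\pw(G)$ factor by $C_f(n)=O(f(n)\log n)$, which is an improvement whenever $f(n)=o(\pw(G)/\log n)$, and additionally applies to classes (e.g.\ with bounded genus, or hyperbolic random graphs) where no good pathwidth approximation is known.
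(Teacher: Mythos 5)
Your proposal follows essentially the same route as the paper: obtain an approximate path decomposition in polynomial time via \cite{feige2005improved}, run the \texttt{PW} algorithm on it, and observe that the number of cops equals the width of the decomposition used, so $\apx_0(G)$ is bounded by $\pw(G)$ times the approximation ratio. The final bounds you reach are the stated ones, so the argument goes through.

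The one point to correct is your reading of what \cite{feige2005improved} actually delivers for \emph{pathwidth}. You attribute to it a decomposition of width $O\big(\pw(G)\sqrt{\log \pw(G)}\cdot\sqrt{\log n}\big)$ in general and $O\big(\pw(G)|V(H)|^2\big)$ for $H$-minor-free graphs, and then treat the $\log n$ in the statement as deliberate padding. That is not how the paper (or the cited Corollary 6.5) arrives at the $\log n$: the Feige--Hajiaghayi--Lee guarantees of $O\big(\sqrt{\log \trw(G)}\big)$ and $O\big(|V(H)|^2\big)$ are for balanced separators/treewidth, and the pathwidth approximation is obtained by converting a tree decomposition into a path decomposition, which costs a genuine multiplicative $\log n$ (via $\pw(G)=O(\trw(G)\log n)$). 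Hence the approximation ratios for pathwidth are $O\big(\sqrt{\log \pw(G)}\,\log n\big)$ and $O\big(|V(H)|^2\log n\big)$, and the $\log n$ in the statement is tight with respect to the known algorithm rather than a weakening. Your padding step happens to land on exactly the correct bound, so the conclusion is unharmed, but as written your intermediate premise overstates the cited result; if someone dropped the padding they would claim a bound that \cite{feige2005improved} does not support. With that citation fixed, your proof coincides with the paper's.
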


Generally speaking, $\pw(G)$ is harder to approximate than $\s(G)$, and the best known approximations \cite{feige2005improved} all carry an additional $\log n$ factor over those for $\s(G)$. As a concrete examples of where the \texttt{STT} algorithm outperforms the \texttt{PW} algorithm in its bound of $\apx_0(G)$ we can consider planar graphs. Since the separator number is $O(\sqrt{n})$ and planar graphs are $K_6$-minor-free, we can apply Propositions \ref{prop:pw_cop_number_approx} and \ref{prop:sublinear_mult_func_stt_cop_numb} to give $\apx_0(G) =O(\sqrt{n})$ via the \texttt{STT} algorithm, however the corresponding bound from \texttt{PW} is only $\apx_0(G) =O(\sqrt{n}\log n)$. In particular a constant factor approximation to $\pw(G)$ is only known for outerplanar graphs \cite{govindan1998approximating}. A similar gap holds for $K_t$-minor-free graphs, where $t\geq 6$. 

\medskip 

\noindent {\bf Conclusion.} As the future direction the reduction of space complexity of the \texttt{STT} algorithm could be analysed further, e.g.~by storing  only the order of separations  and potentially using the LOGSPACE algorithm for recursively computing separation sets on the way, following \cite{logspace-separators}, with careful application and analysis.

\noindent {\bf Acknowledgements.} 
Tymofii Prokopenko was supported by the Centre for
Doctoral Training in Distributed Algorithms. 
\bibliographystyle{plain}
\bibliography{bibliography}

\appendix  
\section{Appendix}\label{Appendix-B}

Since some of our results rely on the Lipton-Tarjan theorem for partitioning planar graphs with $n$ vertices (with the additional requirement that the cost of each vertex is $\frac{1}{n}$) into disjoint parts, we include the theorem for completeness.

\begin{theorem}[Lipton-Tarjan, \cite{LiptonT80}]\label{thm:LT}
Let $G$ be any $n$-vertex planar graph with non-negative vertex cost summing to not more than one. Then the vertices of $G$ can be partitioned into three sets $A,B,S$ such that no edge joins a vertex in $A$ with the vertex in $B$, neither $A$ nor $B$ has total vertex cost exceeding $\frac{2}{3}$, and $S$ contains no more than $2\sqrt{2}\sqrt{n}$ vertices. Furthermore, there exists an algorithm for which $A,B,S$ can be found in $O(n)$ time.
\end{theorem}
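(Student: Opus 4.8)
The plan is to use the classical breadth‑first layering argument of Lipton and Tarjan. First I would fix an arbitrary root $v_0$ and run BFS, partitioning $V(G)$ into levels $L_0,L_1,\dots,L_t$, where $L_i$ is the set of vertices at distance exactly $i$ from $v_0$. Every single level is already a separator, since deleting $L_i$ disconnects the vertices of levels $<i$ from those of levels $>i$; the only problem is that a level can be large. So the task reduces to finding a few levels near the ``weight median'' that are simultaneously small and balanced. Concretely, let $\ell_1$ be the level such that $L_0\cup\dots\cup L_{\ell_1-1}$ has total vertex cost at most $\tfrac12$ while $L_0\cup\dots\cup L_{\ell_1}$ has cost more than $\tfrac12$; then the cost strictly above $L_{\ell_1}$ and the cost strictly below it are each at most $\tfrac12$.

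Next I would apply a counting argument: since $\sum_i|L_i|=n$, at most $\sqrt n$ levels have size exceeding $\sqrt n$, so there is a level $L_{\ell_0}$ with $\ell_0\le\ell_1$ and $\ell_1-\ell_0=O(\sqrt n)$ with $|L_{\ell_0}|\le\sqrt n$, and symmetrically a level $L_{\ell_2}$ with $\ell_2\ge\ell_1$, $\ell_2-\ell_1=O(\sqrt n)$, and $|L_{\ell_2}|\le\sqrt n$ (the case where $L_{\ell_1}$ itself is already small is handled separately). Removing $L_{\ell_0}\cup L_{\ell_2}$ leaves three pieces: the ``top'' block of levels $<\ell_0$ (cost $\le\tfrac12$), the ``bottom'' block of levels $>\ell_2$ (cost $\le\tfrac12$), and a ``middle band'' $M$ consisting of levels $\ell_0+1,\dots,\ell_2-1$, whose BFS depth is $O(\sqrt n)$.

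For $M$ I would contract levels $\le\ell_0$ to a single root and delete levels $\ge\ell_2$, obtaining a planar graph $G'$ with a BFS spanning tree $T$ of height $h=O(\sqrt n)$. After triangulating $G'$ (which only makes separation harder and forces $|E(G')|\le 3|V(G')|-6$), each non‑tree edge $e$ closes a fundamental cycle $C_e$ of length at most $2h+1$; weighting $e$ by the cost strictly enclosed by $C_e$ in the embedding and then repeatedly walking to the non‑tree edge bordering the heavier side of the current face yields an edge $e^\ast$ for which neither side of $C_{e^\ast}$ carries more than $\tfrac23$ of the total cost. Then $S=L_{\ell_0}\cup L_{\ell_2}\cup V(C_{e^\ast})$ is the desired separator: a short case analysis combining the $\le\tfrac12$ bounds on the top/bottom blocks with the $\le\tfrac23$ split of $M$ shows $A,B$ can be assigned with cost $\le\tfrac23$ each, while $|S|\le|L_{\ell_0}|+|L_{\ell_2}|+(2h+1)=O(\sqrt n)$; optimising the choice of $\ell_0,\ell_2$ (trading $|L_{\ell_i}|$ against $|\ell_i-\ell_1|$) sharpens this to $2\sqrt2\sqrt n$. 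For the running time, note BFS is $O(n+|E|)=O(n)$ since planar graphs are sparse, and computing level sizes, selecting $\ell_0,\ell_1,\ell_2$, building and triangulating $G'$, and the walk over fundamental cycles are each $O(n)$, so the algorithm runs in $O(n)$ time.

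The step I expect to be the main obstacle is the fundamental‑cycle lemma for the bounded‑depth middle graph: one must set up the right monovariant so the ``move to the heavier side'' walk provably terminates at a cost‑balanced cycle, and then propagate the constants cleanly through the three combined pieces so that the final bound is the exact $2\sqrt2\sqrt n$ rather than an unspecified $O(\sqrt n)$. The degenerate situations — $L_{\ell_1}$ already small, $G$ not $2$‑connected, or a single vertex carrying almost all the cost — also need individual attention, though each is routine once the main lemma is in place.
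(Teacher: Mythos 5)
The paper does not prove this statement at all: it is quoted verbatim from Lipton and Tarjan's original paper \cite{LiptonT80} and included in the appendix only for completeness. Your sketch is a faithful outline of exactly that original argument (BFS levels, small levels near the weight median, shrinking to a bounded-radius middle graph, triangulation, and the fundamental-cycle lemma with the $2\sqrt{k}+2\sqrt{n-k}\le 2\sqrt{2}\sqrt{n}$ accounting), so there is nothing to compare beyond noting that the details you defer --- the termination of the ``heavier side'' walk and the final case analysis --- are precisely Lemmas 2 and 3 of that paper.
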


\noindent
{\bf Proposition \ref{prop:treesize}.}
{\it \septreesize }
\begin{proof}
	For ease of notation, as $n$ and $f(\cdot)$ are fixed, we denote $\varphi(x):=\varphi_{f(\cdot),n}(x)$. We will prove the proposition by induction on $x$. For the base case we have $x = f(n)$. Thus, by the boundary condition from \eqref{eq:treesize}, we have  $\varphi(x) = 1 \leq   \frac{6f(n)}{f(n)} -1$.
	
	Suppose that the proposition holds for any $f(n)\leq y \leq x$, that is, $\varphi(y) \leq  \frac{6x}{f(n)} -1$. Then, we need to show that $\varphi(x+1) \leq \frac{6(x+1)}{f(n)} -1 $.

	We can assume w.l.o.g. that $x+1> f(n)$ as we have established the base case. Thus, by \eqref{eq:treesize}, 
	we have
	\begin{equation}\label{eq:cases} \varphi(x+1) = \max_{\lceil (x+1)/3 \rceil \leq y \leq \lfloor 2(x+1)/3 \rfloor} \{ \varphi(y) + \varphi(x+1-f(x)-y) \} + 1. \end{equation}
	There is the awkward fact that one or both of the arguments of the terms in the max above may be strictly below $f(n)$. In this case we cannot apply the inductive hypothesis to them. However, by the definition \eqref{eq:treesize} of $\varphi_{f(\cdot),n}(\cdot)$, we know that in this case the offending term is equal to $1$. We proceed by cases. 
	
	If both terms have argument less than $f(n)$, for example if $ \lfloor 2(x+1)/3 \rfloor \leq f(n) $, then by \eqref{eq:cases} we have 
	\[\varphi(x+1) = 2 + 1 \leq \frac{6(x+1)}{f(n)} -1 .\]   
If exactly one of the terms has argument less than $f(n)$, then 
	\begin{align*} \varphi(x+1) &\leq \max_{x+1-f(x)-\lfloor 2(x+1)/3 \rfloor \leq y \leq \lfloor 2(x+1)/3 \rfloor}   \varphi(y)     + 2  \\
		 &\leq \max_{x+1-f(x)-\lfloor 2(x+1)/3 \rfloor \leq y \leq \lfloor 2(x+1)/3 \rfloor}  \frac{6y}{f(n)}  -1    + 2  \\
		&= \frac{6 \lfloor 2(x+1)/3 \rfloor}{f(n)}  +1  \\
		&\leq  \frac{6(x+1)}{f(n)} -1 , \end{align*} 
where we applied the induction hypothesis to the other term. Otherwise,
	\begin{align*}
		\varphi(x+1) &\leq  \max_{\lceil (x+1)/3 \rceil \leq y \leq \lfloor 2(x+1)/3 \rfloor} \left\{ \frac{6y}{f(n)} -1 + \frac{6(x-f(x) - y)}{f(n)} -1 \right\} + 1\\
		&= \frac{6(x-f(x))}{f(n)} -1\\
		&\leq  \frac{6(x+1)}{f(n)} -1.
	\end{align*}
Thus, in all three cases the inductive step holds. \qed 
\end{proof}

\noindent
{\bf Proposition~\ref{prop:sublinear_mult_func_stt_cop_numb}.}
{\it  \multprop }

\begin{proof} 
By the conditions on $f$, for all $i\geq 0$, we have \[f\Big(\Big\lfloor \Big(\frac{2}{3}\Big)^i \cdot n \Big\rfloor\Big) \leq f\Big(\Big(\frac{2}{3}\Big)^i \cdot n \Big)  =O\Big(f(n) \cdot f\Big(\Big(\frac{2}{3}\Big)^i \Big) \Big) = O(f(n))\cdot \Big(\frac{2}{3}\Big)^i.\] Thus applying \cref{prop:sst_cop_number} in this case we obtain
\[
\cop_0(G) \leq C_f(n)=f(n)+\sum_{i=0}^{\ell(n)} f\Big(\Big\lfloor \Big(\frac{2}{3}\Big)^i \cdot n \Big\rfloor\Big) = O(f(n)) \cdot \sum_{i=0}^{\ell(n)}\Big(\frac{2}{3}\Big)^i = O(f(n)), 
\]
 where we bounded the last sum by a geometric series.  \qed\end{proof}

\noindent
{\bf Proposition~\ref{Cop-hyperbolic-random}.}
{\it  \hrgprop }

\begin{proof}To prove this result we apply our results using several previously established properties of hyperbolic random graphs. 

Firstly, by the bounds on treewidth from \cite[Theorem 9]{blasius2016hyperbolic} and as $ \s(G) =\Theta(\trw(G))$ \cite[Theorem 1]{DvorakN19}, if $G$ is any largest component of a hyperbolic random graph $\mathcal{G}_{\mathrm{hyp}}(n, \alpha)$, then with high probability $\s(G) \leq f(n)$ where 
\begin{equation}\label{eq:fhyp}f(n) = \begin{cases}
O\left(n^{1-\alpha}\right) & \text{for } \alpha < 1, \\
O(\log^2 n) & \text{for } \alpha = 1, \\
O(\log n) & \text{for } \alpha > 1.
\end{cases}\end{equation}
 Recall that by \cref{prop:sst_cop_number} we have $\cop_0(G) \leq C_f(n) $ where, by \eqref{eq:2} and \eqref{eq:copbound},  
\begin{equation}\label{eq:Cfrecap}
  C_f(n) = f(n)+ \sum_{i=0}^{\ell(n)} f\Big(\Big\lfloor \Big(\frac{2}{3}\Big)^i\cdot n\Big\rfloor \Big), \quad \text{and} \quad  \ell(n):= \left\lceil  \log_{3/2} n\right\rceil + 1. \end{equation} 
Thus by inserting the bounds from \eqref{eq:fhyp} in to \eqref{eq:Cfrecap} we have 
\begin{equation}\label{eq:hypbounds}
C_f(n) =
\begin{cases}
\sum_{i=0}^{\left\lceil  \log_{3/2} n\right\rceil + 1} O\left(((2/3)^i n)^{1-\alpha}\right) =  O\left(n^{1-\alpha}\right) & \text{for } \alpha < 1, \\
 \sum_{i=0}^{\left\lceil  \log_{3/2} n\right\rceil + 1} O(\log^2 n)  = O(\log^3 n) & \text{for } \alpha = 1, \\
\sum_{i=0}^{\left\lceil  \log_{3/2} n\right\rceil + 1} O(\log n) = O(\log^2 n ) & \text{for } \alpha > 1,
\end{cases}
\end{equation}
where we take $k=C_f(n)$ in the statement of the theorem. 

For the capture time, recall that for this $k$ we have $\capt_k(G) \leq O(D_G \cdot \frac{n}{f(n)} )$ by \cref{thm:stt_capt_time}. Note that, assuming the average degree parameter $\nu$ is suitably large (a common assumption for the RHG) and $1/2<\alpha \leq 1$ then w.h.p.\  $D_G = O(\log n)$. We will also use the fact that for $\alpha >1$ w.h.p.\  any connected component of the hyperbolic random graph has at most $n^{\frac{1}{2\alpha}}\log^{O(1)} n$ vertices \cite[Theorem 1.1]{LargestSub}, thus this bound also holds for the diameter $D_G$. Plugging these bounds on $D_G$, along with those on $f(n)$ from \eqref{eq:hypbounds}, into the bound on $\capt_k(G)$ from \cref{thm:stt_capt_time} gives that w.h.p.\ the claimed bound on the capture times holds.   

We finally address the runtime of the \texttt{STT} algorithm  on the hyperbolic random graph. By \cref{prop:run_time} the \texttt{STT} algorithm runs in time \(O\Big(\frac{n}{f(n)} \cdot T(n)\Big)\), where $T(n)$ is the run time of the separation algorithm. Observe that the separator decomposition in \cite{blasius2016hyperbolic} is based around a fixed partitioning of the hyperbolic disc. It follows that if we are supplied with the hyperbolic representation of the graph $G$, then a separator can be computed in time $O(n)$, that is $T(n)=O(n)$. This, plugging this bound along with the w.h.p.\ bounds on $f(n)$ from \eqref{eq:hypbounds} into \cref{prop:run_time} gives the claimed bound on running time.\qed  
\end{proof}

\noindent
{\bf Proposition~\ref{prop:stt_cop_number_approx}.}
{\it \approxcop
}

\begin{proof}
We will obtain our bounds on the approximate zero-visability cop number by applying the best known approximation ratio for separators from \cite{feige2005improved}.

In particular, suppose we have a polynomial time approximation algorithm for finding separators in an $f$-separable class containing $G$, that has approximation ratio $A(n)$ (where assume w.lo.g.\ that $A(n)$ is non-decreasing). Then, using this approximation ratio together with \cref{prop:sst_cop_number} gives 
\begin{align}\apx_0(G) &= f(n)\cdot A(n) +  \sum_{i=0}^{\ell(n)} f\Big(\Big\lfloor \Big(\frac{2}{3}\Big)^i \cdot n \Big\rfloor \Big)  \cdot A\Big(\Big\lfloor \Big(\frac{2}{3}\Big)^i \cdot n \Big\rfloor \Big) \notag \\
&\leq  A(n) \cdot \Bigg( f(n) +  \sum_{i=0}^{\ell(n)} f\Big(\Big\lfloor \Big(\frac{2}{3}\Big)^i \cdot n \Big\rfloor \Big)  \Bigg)\notag \\
&= A(n) \cdot C_f(n) \label{eq:basicapprox}.  \end{align}

Firstly, by \cite[Theorem 3.11]{feige2005improved}, for any $f$-separable graph $G$ there is a polynomial time approximation algorithm for finding a separating set with an approximation ratio of  \( A(n)= O\big(\sqrt{\log( f(n)) }\big) \).  Thus, for  any $f$-separable $G$,  inserting this bound in \eqref{eq:basicapprox} gives \begin{equation}
    \label{eq:generalbasic}\apx_0(G)= O\Big(C_f(n) \cdot \sqrt{\log f(n)} \Big) .\end{equation}

For the special case of \textit{H}-minor-free graphs, for any graph $H$, there is a $O(|V(H)|^2)$-approximation algorithm for separators by \cite[Theoerm 4.2]{feige2005improved}. Using this in \eqref{eq:basicapprox} yields \( \apx_0(G)  = O(|V(H)|^2\cdot C_f(n)) \) in this case. 

Similarly there is an $O(g)$-approximation algorithm for separtors in graphs of genus at most $g$ by \cite{feige2005improved}. Applying this to \eqref{eq:basicapprox}  gives \( \apx_0(G)  = O(C_f(n)\cdot g) \). 
 \qed
\end{proof}

\noindent
{\bf Proposition~\ref{prop:pw_cop_number_approx}.}
{\it  \pwapprox } 
\begin{proof}
Since we need to move cops consecutively from bag \(B_i\) to \(B_{i+1}\) for all \(i\in {1,...,n-1}\) and cover each vertex of \(G[B_j]\) for any \(B_j\), we have to use $\pw(G)$ cops to `clear' the graph with respect to the \texttt{PW} algorithm.
Now, we will use an approximate path decomposition of \( G \) with width \( \pw_{\text{app}}(G) \), instead of computing an optimal path decomposition. To derive the bound stated in the proposition, we use the best known polynomial time approximation algorithm for $\pw(G)$ by \cite[Corollary 6.5]{feige2005improved}, which has approximation ratio $O(\sqrt{\log \pw(G)}\log n )$. This gives a bound of $\apx_0(G)=O(\pw(G)\sqrt{\log \pw(G)}\log n)  $ from the $PW$ algorithm.  For $H$-minor-free graphs, \cite[Corollary 6.5]{feige2005improved} gives a better approximation ratio of $O(|V(H)|^2\log n)$, which we apply similarly. \qed 
\end{proof}

\end{document}